
\documentclass[runningheads]{llncs}
%
% If IEEEtran.cls has not been installed into the LaTeX system files,
% manually specify the path to it like:
% \documentclass[journal]{../sty/IEEEtran}

% Some very useful LaTeX packages include:
% (uncomment the ones you want to load)

% *** MISC UTILITY PACKAGES ***
%
%\usepackage{ifpdf}
% Heiko Oberdiek's ifpdf.sty is very useful if you need conditional
% compilation based on whether the output is pdf or dvi.
% usage:
% \ifpdf
%   % pdf code
% \else
%   % dvi code
% \fi
% The latest version of ifpdf.sty can be obtained from:
% http://www.ctan.org/pkg/ifpdf
% Also, note that IEEEtran.cls V1.7 and later provides a builtin
% \ifCLASSINFOpdf conditional that works the same way.
% When switching from latex to pdflatex and vice-versa, the compiler may
% have to be run twice to clear warning/error messages.

% *** CITATION PACKAGES ***
%
\usepackage{cite}
% cite.sty was written by Donald Arseneau
% V1.6 and later of IEEEtran pre-defines the format of the cite.sty package
% \cite{} output to follow that of the IEEE. Loading the cite package will
% result in citation numbers being automatically sorted and properly
% "compressed/ranged". e.g., [1], [9], [2], [7], [5], [6] without using
% cite.sty will become [1], [2], [5]--[7], [9] using cite.sty. cite.sty's
% \cite will automatically add leading space, if needed. Use cite.sty's
% noadjust option (cite.sty V3.8 and later) if you want to turn this off
% such as if a citation ever needs to be enclosed in parenthesis.
% cite.sty is already installed on most LaTeX systems. Be sure and use
% version 5.0 (2009-03-20) and later if using hyperref.sty.
% The latest version can be obtained at:
% http://www.ctan.org/pkg/cite
% The documentation is contained in the cite.sty file itself.

% *** GRAPHICS RELATED PACKAGES ***
%
\usepackage{caption2}
\usepackage{graphicx}
\usepackage{amsmath}
\usepackage{amsfonts}
\usepackage{url}
% url.sty was written by Donald Arseneau. It provides better support for
% handling and breaking URLs. url.sty is already installed on most LaTeX
% systems. The latest version and documentation can be obtained at:
% http://www.ctan.org/pkg/url
% Basically, \url{my_url_here}.

% *** Do not adjust lengths that control margins, column widths, etc. ***
% *** Do not use packages that alter fonts (such as pslatex).         ***
% There should be no need to do such things with IEEEtran.cls V1.6 and later.
% (Unless specifically asked to do so by the journal or conference you plan
% to submit to, of course. )
%\usepackage{amsthm}

% correct bad hyphenation here
\hyphenation{op-tical net-works semi-conduc-tor}
\begin{document}

\title{Secure Linear Aggregation Using Decentralized Threshold Additive Homomorphic Encryption For Federated Learning}
\author{Haibo~Tian\inst{1} \and
        Fangguo~Zhang\inst{1} \and
        Yunfeng~Shao\inst{2} \and
        Bingshuai~Li\inst{2}
        % <-this % stops a space
\thanks{Dr. Tian and Prof. Zhang were with the GuangDong Province Key Laboratory of Information Security Technology, School of Data and Computer Science,
Sun Yat-Sen University, Guangzhou, Guangdong, 510275, P. R. China e-mail: \{tianhb, isszhfg\}@mail.sysu.edu.cn.
Dr. Shao and Dr. Li were with Huawei Noah’s Ark Lab. e-mail: \{shaoyunfeng, libingshuai\}@huawei.com.
}
% <-this % stops a space
}
\institute{Sun Yat-Sen University\and 
	Huawei Noah’s Ark Lab
}
%\thanks{Manuscript received XX, XXXX; revised XX, XXXX.}}

% note the % following the last \IEEEmembership and also \thanks -
% these prevent an unwanted space from occurring between the last author name
% and the end of the author line. i.e., if you had this:
%
% \author{....lastname \thanks{...} \thanks{...} }
%                     ^------------^------------^----Do not want these spaces!
%
% The paper headers
\markboth{Journal of IEEE Transactions on Information Forensics and Security}%
{Tian \MakeLowercase{\textit{et al.}}: Secure Linear Aggregation using DTAHE for FL}
% If you want to put a publisher's ID mark on the page you can do it like
% this:
%\IEEEpubid{0000--0000/00\$00.00~\copyright~2015 IEEE}
% Remember, if you use this you must call \IEEEpubidadjcol in the second
% column for its text to clear the IEEEpubid mark.
%\IEEEspecialpapernotice{(Invited Paper)}
\maketitle
\begin{abstract}
Secure linear aggregation is to linearly aggregate private inputs of different users with privacy protection. The server in a federated learning (FL) environment can fulfill any linear computation on private inputs of users through the secure linear aggregation. At present, based on pseudo-random number generator and one-time padding technique, one can efficiently compute the sum of user inputs in FL, but linear calculations of user inputs are not well supported. Based on decentralized threshold additive homomorphic encryption (DTAHE) schemes, this paper provides a secure linear aggregation protocol, which allows the server to multiply the user inputs by any coefficients and to sum them together, so that the server can build a full connected layer or a convolution layer on top of user inputs. The protocol adopts the framework of Bonawitz et al. to provide fault tolerance for user dropping out, and exploits a blockchain smart contract to encourage the server honest. The paper gives a security model, security proofs and a concrete lattice based DTAHE scheme for the protocol.  It evaluates the communication and computation costs of known DTAHE construction methods. The evaluation shows that an elliptic curve based DTAHE is friendly to users and the lattice based version leads to a light computation on the server.
\end{abstract}

\begin{keywords}
Privacy Protection, Secure Linear Aggregation, Additive Homomorphic Encryption, Smart Contract.
\end{keywords}

% For peer review papers, you can put extra information on the cover
% page as needed:
% \ifCLASSOPTIONpeerreview
% \begin{center} \bfseries EDICS Category: 3-BBND \end{center}
% \fi
%
% For peerreview papers, this IEEEtran command inserts a page break and
% creates the second title. It will be ignored for other modes.
%\IEEEpeerreviewmaketitle

\section{Introduction}
Federated learning (FL) is intended to train better machine learning models on decentralized real-world data. The models then could be used to build more intelligent equipments for people, such as cars, wearable devices or browsers. McMahan et al. \cite{MM16} proposed a well known FL protocol. The players in their protocol include users who owned data and a parameter server that aggregates model information of users. The protocol runs periodically. In each period, the parameter server randomly selects some users to upload their local model parameters, and averages the parameters to update a global learning model. A user in a period downloads the global learning model, feeds their local data, runs a deep learning network locally and gets updated local model parameters, the information of which is sent to the parameter server. In different periods the server may select different users, and within a period some of the selected users may drop out.

As real-world data are usually sensitive, an important problem in FL is data privacy. Although user data are not directly sent to the parameter server in FL, information of local model parameters may leak the raw data of users. Fredrikson et al. \cite{FJR15} show how to recover train samples from prediction results of a model. Rubaie and Chang \cite{RA16} exploit feature vectors to reconstruct raw input data of a model. Chai et al. \cite{CWCY19} show how to recover preferences of users by model gradient data. Bonawitz et al. \cite{B17} believe that recent updated local model parameters of a user may leak raw data of the user.

There are mainly two approaches to solve the data privacy problem in FL. One uses differential privacy and the other uses cryptographic tools. The work of Martin et al. \cite{DLDP16} is an earlier report of the differential privacy approach. Wei et al. \cite{WLD20} point out a tradeoff between the convergence performance and privacy protection levels of the differential privacy method. For a fixed privacy protection level, the number of users increases, the convergence performance behaves better. However, if the number of users in each period is limited, one may use the cryptographic tools based approach. Bonawitz et al. \cite{B17} propose an elegant solution based on one-time padding and a secure pseudorandom generator.

To the best of our knowledge, the solution of Bonawitz et al. \cite{B17} is the only work suitable for the FL using cryptographic tools. They take the data privacy problem in FL as an secure aggregation problem. And they show some new requirements of a secure aggregation protocol for the FL. Except a security requirement, other requirements are as follow:
\begin{enumerate}
\item The protocol should operates on a high-dimensional vectors;
\item The protocol should tolerate users dropping out;
\item The protocol should be communication efficient even with a new set of uses on each period.
\end{enumerate}
With these requirements, Bonawitz et al. \cite{B17} show that previous works are unsatisfactory which include some works based on homomorphic encryption schemes. In detail, they believe that solutions based on Paillier scheme \cite{RN10,JK12,JL13,CANS2014} are either computationally expensive or require additional trusted dealer, and solutions based on ElGamal scheme \cite{NDSS11,FC12,PUDA12,LC13} need a high expansion factor considering the size of the group elements and that of the model parameters.

Considering the development of the communication technology, we believe that a moderate expansion factor is acceptable and the functionality of a protocol is more important. Liu et al. \cite{L4MZ20} proposed a federated forest where a parameter server should find the maximal value of user inputs. Zhuo et al. \cite{zhuo2020} proposed a federated deep reinforcement learning model where a parameter server needs to build a multi-layer perception on user inputs. The users in these scenarios are usually not mobile users so that the communication cost is not the dominate factor. However, to protect the privacy of users, a sum only secure aggregation is not enough.

We provide a secure linear aggregation protocol to enrich the parameter server. It could naturally be used in the federated averaging algorithm \cite{MM16} in the same way as the secure sum aggregation \cite{B17}. It also could be used in \cite{zhuo2020} to build a linear multi-layer perception to get a reinforcement learning model. It may be adapted in \cite{L4MZ20} with the homomorphic encryption \cite{real} to find the maximal value of user inputs. For simplicity, the solution here is only based on decentralized threshold additive homomorphic encryption (DTAHE) schemes.

Currently, there are three known methods to construct a DTAHE scheme. The first method is to distribute many secret shares to a user. Bendlin and Damg{\aa}rd \cite{BD10} proposed a threshold homomorphic encryption scheme in this way. It relies on secret sharing schemes with general access structure. Boneh et al. \cite{BGGJK} proposed such a scheme based on secret sharing schemes constructed by a monotone formulae. The second method is to use a large modulus for coefficients of an element in a polynomial ring. Boneh et al. \cite{BGGJK} propose to use Shamir secret sharing scheme in this way. The last method is to use the ElGamal encryption as a basic building tool \cite{P91}. We exclude the Paillier encryption based construction method since it is hard to produce a shared key pair in a distributed manner without a trusted dealer. According to our evaluation, in the secure linear aggregation protocol, the communication overhead of the first method is too high and the computation overhead of the third method is a bit high. The second method also has a drawback since a large modulus means a higher polynomial degree when the noise bound is fixed \cite{FV12}. So we provide a new method to construct a DTAHE scheme with security proofs. It does not increase the modulus size and polynomial degree.

The secure linear aggregation protocol is against an active adversary \cite{B17}. An active adversary could corrupt a parameter server and ask $t$ users to decrypt a cipher of a target user. It is not easy to defend against the attack. Note that Bonawitz et al. \cite{B17} add a consistency check round to solve a similar problem in their protocol. However, even we add such a round, the problem still exists since the target user may have dropped out before the consistency check round. To solve the problem, we introduce a blockchain system. We design a smart contract to record ciphers of users and to check the evaluation process of the parameter server. If the parameter server deviates its expected behaviours in a period, the parameter server could be punished and the users in that period could get compensation.

There exists a lot of works to introduce a blockchain into the FL. A top complained problem in FL is the incentive of users to participate a learning process. Researchers propose blockchain enabled models \cite{ZLY19,BSXHH19,LHZMZ20,QGLXYLZ20,WWZLZL20} to give rewards to data owners. Basically, model initializer proposes model parameters and rewards in the blockchain, data owners choose their interested model to download the current model parameters, to train an updated model by their local private inputs and to update information of model parameters to the blockchain, and miners of the blockchain aggregate inputs of users to get a new global model parameter and rewards relevant users. Pokhrel and Choi \cite{PC20} and Kim et al. \cite{KPBK20} show some theoretical results about the performance of blockchain enabled FL considering the delays in a blockchain. A simulation result in \cite{WGWJYL20} shows that the FL without a blockchain is most efficient. Another motivation to introduce a blockchain system to the FL is about the trustiness of a learning model. Sarpatwar et al. \cite{SVMSHG19} propose to model and capture provenance of an overall learning process for verification. Awan et al. \cite{ALLL19} propose to record data produced in a learning task in a blockchain for verification. We use a blockchain as a trusted third party to verify the behaviours of the parameter server. Since a verification process usually could be separated from a learning process, the FL and the blockchain could work at their own paces.

In summary, our contributions are as follows.
\begin{itemize}
\item We give a definition of DTAHE for FL and provide a basic linear aggregation protocol based on the definition. The basic protocol supports a parameter server to linearly operate ciphers of model parameters from different users.
\item We provide a secure linear aggregation protocol against an active adversary with security model and proofs. It shows how to use a blockchain smart contract to help a secure linear aggregation protocol.
\item We provides a new method to construct a lattice based instance of the DTAHE scheme with security proofs. Evaluations show that the DTAHE scheme leads to a lightweight computation on the server side.
\end{itemize}

\section{Preliminaries}
\subsection{Basic Notations}
For any set $X$, we denote by $|X|$ the number of elements of the set $X$. If $x$ is a string, $|x|$ denotes its bit length. And if $x$ is a vector, $|x|$ denotes the dimension of the vector. $x||y$ denotes the bit catenation of two strings $x$ and $y$.

Let $R = \mathbb{Z}[x]/(f(x))$ be a polynomial ring where $f(x)$ is a monic irreducible polynomial of degree $d$. Elements of the ring $R$ is denoted by vectors. For $\vec{a} \in R$, the coefficients of $\vec{a}$ is denoted by $a_i$ such that $\vec{a} = \sum_{i=0}^{d-1}a_i\cdot x^i$. The infinity norm of $||\vec{a}||$ is defined as $max_i|a_i|$ and the expansion factor of R is defined as $\delta_R = max\{||\vec{a}\cdot \vec{b}||/(||\vec{a}||\cdot ||\vec{b}||):\vec{a},\vec{b} \in R\}$.

Let $h > 1$ be an integer. Then $\mathbb{Z}_h$ denotes a set of integers $(-\frac{h}{2},\frac{h}{2}]$. The symbol $\mathbb{Z}/q\mathbb{Z}$ denotes a ring on integers $\{0,\ldots,q-1\}$. For $x \in \mathbb{Z}$, $[x]_h$ denotes the unique integer in $\mathbb{Z}_h$ with $[x]_h = x \bmod h$. For $\vec{x} \in R$, $[\vec{x}]_h$ denotes the element in $R$ obtained by applying $[\cdot]_h$ to all its coefficients. For $x \in \mathbb{R}$, $\lfloor x \rceil$ denotes rounding to the nearest integer and $\lfloor x \rfloor$, $\lceil x \rceil$ denote rounding up or down.

Let $\lambda$ be an integer as the security parameter. A function $negl(\lambda)$ is negligible in $\lambda$ if $negl(\lambda) = o(1/\lambda^c)$ for every $c \in \mathbb{N}$. An event occurs with negligible probability if the probability of the event is $negl(\lambda)$. An event occurs with overwhelming probability if its complement occurs with negligible probability.

Given a probability distribution $\mathcal{D}$, we use $x \leftarrow \mathcal{D}$ to denote that $x$ is sampled from $\mathcal{D}$. For a set $X$, $x \leftarrow X$ denotes that $x$ is sampled uniformly from $X$. A distribution $\chi$ over integers is called $B$-bounded if it is supported on $[-B,B]$.
\subsection{Federated Learning}
McMahan et al. \cite{MM16} described the FL framework. We give a briefly review to see their federated averaging algorithm. As shown in Fig. \ref{fw1}, there is a parameter server $S_1$ and many users. $S_1$ exchanges model parameters with users to collaboratively build a better learning model. The federated averaging algorithm runs periodically. In each period, $S_1$ selects $n$ users randomly. In Fig. \ref{fw1}, we intended to show two periods with totally different users. Next we focus on a period $e$ where users in a set $U_e$.
\begin{figure}[htbp]
  \centering
  % Requires \usepackage{graphicx}
  \includegraphics[width=0.4\textwidth]{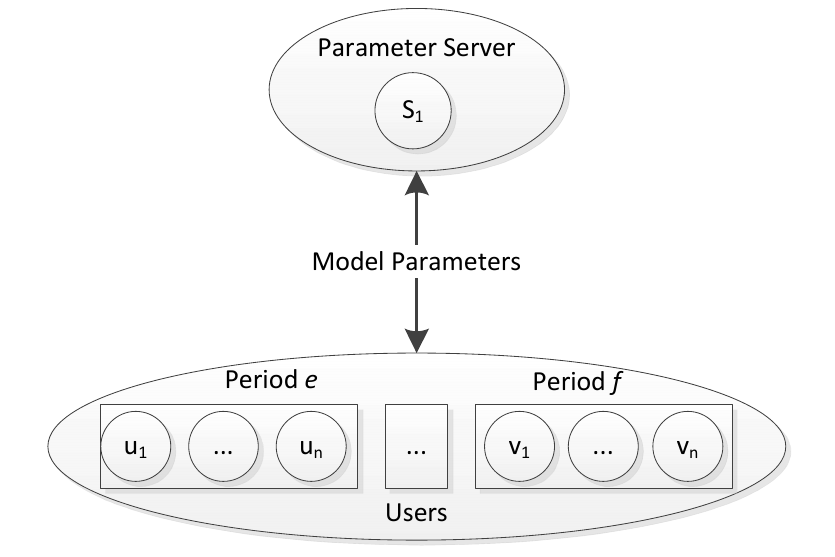}\\
  \caption{The framework of FL}\label{fw1}
\end{figure}

The FL in a period is as follows.
\begin{enumerate}
\item A user $u \in U_e$ is activated to download the global parameters from the parameter server.
\item The user $u$ feeds their local data samples to get updated model parameters. And the parameters or their gradients $\omega_u$ with the number of local data samples $n_u$ should be sent to $S_1$.
\item The parameter server $S_1$ compute the updated global model parameters $\omega = \frac{1}{\sum_{u \in U_e} n_u} \sum_{u \in U_e} n_u\omega_u$.
\end{enumerate}
The parameter server $S_1$ continues to randomly activate a set of users in the next period until the parameters $\omega$ converge to that of the global optimal learning model.

Users could upload the $n_u$ together with $\omega_u$ so that the server could get the value $\sum_{u \in U_e} n_u$ and compute the updated $\omega$. So a sum only aggregation is enough for the federated averaging algorithm. However, a parameter server may execute more computations in other federated learning models \cite{L4MZ20,zhuo2020}, where a sum only aggregation is not enough.

\subsection{The Model Of Secure Aggregation For FL}
\label{promod}
We adapt the notion of message-driven entities in \cite{RG05} to the FL security model \cite{B17} to describe the participants in a FL scenario. A message-driven entity is initially invoked by an environment process. Each entity has their initial states. Once invoked, an entity waits for an activation that can happen for a message from the network or an environment process. On activation, the entity processes the incoming message or the arguments from a process with its current internal state, and generates a new internal state, outgoing messages and a cumulative output. Once an activation is completed, the entity waits for the next activation if it does not stop.

A protocol $\pi$ consists of a server side progress $\pi_S$ and a client side progress $\pi_u$. An FL process in $S_1$ invokes $\pi_S$, which runs periodically until $\pi_S$ is stopped by the FL process. In each period, $\pi_S$ randomly selects a set of users $U$ and activates their client side progresses. A user $u$ should have voluntarily invoked their $\pi_u$ before $S_1$ activates them. On activation, $\pi_u$ exchanges messages with $\pi_S$ until $\pi_S$ finishes a period or $\pi_u$ is stopped by the user. There may be some auxiliary entities for the security of the protocol $\pi$. A common setup includes a certificate authority (CA) \cite{B17}. A $CA$ is invoked and runs permanently to receive certificate requests from users and issue certificates to users. In our protocol, we also introduce blockchain miners $BM$. Miners are invoked to receive transactions from users and the server, and to execute smart contracts on commonly consented transactions.

The server and each user have a bi-directional secure channel between them. Practically, it reuses a transport layer secure channel established by FL processes. Users do not have direct links in the protocol $\pi$. The communications of users are transferred by the server. Users and the $CA$ may exchange messages at an initial stage. When the protocol $\pi$ runs, the $CA$ could be offline. Blockchain miners usually have a point-to-point (P2P) network to deliver transactions and blocks. When a blockchain is used in $\pi$, FL entities should have the ability to send and receive transactions to and from the P2P network.

An active adversary $\mathcal{A}$ could corrupt the server and users in $\pi$. For each period, the number of corrupted parties is at most $n_c$ \cite{B17}. When $\mathcal{A}$ corrupts an entity, it knows the internal states and long-term secrets of the entity, and controls all the behaviours of the entity from that point. Obviously, the server $S_1$ could be corrupted to act as a malicious server. When $\mathcal{A}$ corrupts a user, it could send transactions on behalf of the user.

We does not allow an adversary to corrupt $CA$ or $BM$ since we do not try to model the security of a CA system or a blockchain system. We simply make assumptions about them:
\begin{itemize}
\item A transaction from a user or the server will be executed and recorded correctly by blockchain miners within a bounded delay.
\item A certificate from a $CA$ proves the binding relation of a verification key and the real identity of a user.
\end{itemize}

We define $GO_{\pi, e, \{S_1\}\cup U_e, \mathcal{A}}(\vec{x},\vec{r})$ as a global output for $\pi$ in a period $e$ where $\vec{x}=\{x_S\} \cup \{x_u\}_{u \in U_e}$ denotes the inputs of the server $S_1$ and the inputs of users in the set $U_e$ of the period $e$, and $\vec{r} = \{r_0, r_S\} \cup \{r_u\}_{u \in U_e}$ denotes the random inputs of the adversary $\mathcal{A}$, server $S_1$ and users in $U_e$. Let \[GO_{\pi,e, \{S_1\}\cup U_e, \mathcal{A}}(\vec{x})\] be the random variable about $GO_{\pi,e, \{S_1\}\cup U_e, \mathcal{A}}(\vec{x},\vec{r})$. The randomness comes from the adversary $\mathcal{A}$, server $S_1$ and users in $U_e$.

We define the input privacy of a user \cite{B17} as follows.
\begin{definition}
\label{pri}
(Input Privacy of a User).
For any uncorrupted user $u \in U_e$ in a period $e$, the privacy of its input is assured if \[GO_{\pi,e, \{S_1\}\cup U_e, \mathcal{A}}(\vec{x}) \approx GO_{\pi,e, \{S_1\}\cup U_e, \mathcal{A}}(\vec{x}^\prime)\] where $\vec{x}^\prime = \{x_S\}\cup \{x_v\}_{v \in U_e \backslash \{u\}}\cup\{0\}$ and ``$\approx$'' means computationally indistinguishable.
\end{definition}

The security goal in \cite{B17} is input privacy of all uncorrupted users. Apparently, if the privacy of all users are protected, the privacy of a user is protected. If the privacy of each uncorrupted user in protected, the privacy of all uncorrupted users are protected. So the two definitions are equivalent.

\subsection{DTAHE}
Boneh et al. \cite{BGGJK} give a definition of decentralized threshold fully homomorphic encryption (DTFHE). We refine it as a DTAHE definition for the FL scenario.
A DTAHE scheme is a tuple of probabilistic polynomial time (PPT) algorithms $DTAHE = (Setup$, $KeyGen$, $Share$, $CombKey$, $Enc$, $Eval$, $ParDec$, $FinDec)$.
\begin{enumerate}
\item $Setup(1^\lambda) \rightarrow parm$: It takes as input a security parameter $\lambda$, outputs system parameters $parm$.
\item $KeyGen(parm) \rightarrow (sk_u, pk_u)$: It takes as input the system parameter $parm$ to produce public and private keys for a user $u$.
\item $Share(parm, \{pk_v\}_{v \in U\backslash\{u\}}, t, sk_u) \rightarrow \{e_{v,u}\}_{v \in U\backslash\{u\}}$: It takes as input the system parameter $parm$, public keys of users in a set $U$ excluding the user $u$, a threshold value $t$ and private keys $sk_u$ of the user $u$, to produce encrypted shares $e_{v,u}$ for each user $v \in U \backslash\{u\}$.
\item $CombKey(parm, \{pk_u\}_{u \in U}) \rightarrow pk$: It takes as input the system parameter $parm$, public keys of a set of users in $U$, and produces an encryption key $pk$.
\item $Enc(parm, pk, m_u) \rightarrow c_u$: It takes as input the system parameter $parm$, a public key $pk$ and a message $m_u$ from a user $u$, and produces a ciphertext $c_u$.
\item $Eval(parm, \{c_u\}_{u \in U}, \{\alpha_u\}_{u \in U}) \rightarrow \hat{c}$: It takes as the system parameter $parm$, ciphers $\{c_u\}_{u \in U}$ and coefficients $\{\alpha_u\}_{u \in U}$, and produces an evaluated cipher $\hat{c}=\sum_{u \in U}\alpha_u \cdot c_u$.
\item $ParDec(parm,\hat{c},\{e_{u,v}\}_{v \in U}) \rightarrow \hat{m}_u$: It takes as input the system parameter $parm$, the cipher $\hat{c}$, and a set of encrypted shares $\{e_{u,v}\}_{v \in U \backslash \{u\}}$ to the user $u$, and produces a partially decrypted value $\hat{m}_u$.
\item $FinDec(parm,t,\hat{c},\{\hat{m}_u\}_{u \in V}) \rightarrow m$: It takes as input the system parameter $parm$, the threshold value $t$, the cipher $\hat{c}$ and partially decrypted ciphers $\{\hat{m}_u\}_{u \in V}$ from users in a set $V$ with $|V| \geq t$, and produces a plaintext $m$.
\end{enumerate}

One could simply give an ElGamal based DTAHE instance following the constructions in \cite{P91}, which justifies the correctness of the DTAHE definition.

\subsection{DTAHE Model}
\label{DTAHEModel}
We adapt the model of DTFHE in \cite{BGGJK} for the DTAHE. The first definition is evaluation correctness.
\begin{definition}
\label{corrdef}
(Evaluation Correctness). A DTAHE scheme for a set of users $U$ satisfies evaluation correctness if for all $\lambda$ and $t$, the following holds:

For an evaluated cipher \[\hat{c} \leftarrow Eval( parm, \{c_u\}_{u \in U}, \{\alpha_u\}_{u \in U})\] the probability
\[Pr\left[ \begin{gathered}
  FinDec\left( \begin{gathered}
  parm,t, \hat c, \hfill \\
  {\{ ParDec(parm,\hat c,{\{ {e_{u,v}}\} _{v \in U}})\} _{u \in V}} \hfill \\
\end{gathered}  \right) \hfill \\
   = \sum\limits_{u \in U} {{\alpha _u}}  \cdot {m_u} \hfill \\
\end{gathered}  \right]\]
is overwhelming where \[c_u \leftarrow Enc(parm, pk, m_u),\] \[(\{e_{v,u}\}_{v \in U}) \leftarrow Share(parm, \{pk_v\}_{v \in U\backslash\{u\}}, t, sk_u),\] \[(sk_u, pk_u) \leftarrow KeyGen(parm),\] \[pk \leftarrow CombKey(parm, \{pk_u\}_{u \in U}),\] and \[parm \leftarrow Setup(1^\lambda).\]
\end{definition}

The second definition is sematic security. It captures the privacy of messages.
\begin{definition}
\label{semdef}
(Sematic Security). We say that a DTAHE scheme for a user set $U$ satisfies sematic security if for all $\lambda$, the following holds:

For any PPT adversary $\mathcal{A}$, the following experiments
$Expt_{\mathcal{A},Sem}(1^\lambda)$ outputs $1$ with probability $\frac{1}{2}+negl(\lambda)$:
\begin{itemize}
\item $Expt_{\mathcal{A},Sem}(1^\lambda)$:
    \begin{enumerate}
    \item The adversary outputs $U$ and $V$ where $|U| =n$ and $|V| = t$ specify an access structure.
    \item The challenger runs \[parm \leftarrow Setup(1^\lambda),\] \[(sk_u,pk_u)\leftarrow KeyGen(parm),\]
    \[{\{ {e_{v,u}}\} _{v \in U  \backslash \{u\}}} \leftarrow Share\left( \begin{gathered}
  parm,{\{ p{k_v}\} _{v \in U\backslash \{ u\} }}, \hfill \\
  t,s{k_u} \hfill \\
\end{gathered}  \right),\]
     \[pk \leftarrow CombKey(parm,\{pk_u\}_{u \in U}),\] and provides $(parm, pk, \{\{e_{v,u}\}_{v \in U  \backslash \{u\}}\}_{u \in U})$ to $\mathcal{A}$.
    \item $\mathcal{A}$ outputs a set $S \subseteq U$ such that $|S| < t$. It submits message vectors $\{m_{u,0},m_{u,1}\}_{u \in U}$ and $S$ to the challenger.
    \item The challenger provides $\mathcal{A}$ the shares $\{\{s_{u,v}\}_{v \in U }\}_{u \in S}$ and a cipher set \[\{c_u \leftarrow Enc(parm, pk, m_{u,b})\}_{u \in U}\text{, for }b \in\{0,1\}.\]
    \item $\mathcal{A}$  outputs a  guess bit $b^\prime$. The experiment outputs $1$ if $b = b^\prime$.
    \end{enumerate}
\end{itemize}
\end{definition}

The last definition is simulation security. It captures the privacy of shared secrets and private keys of users.
\begin{definition}
\label{simdef}
(Simulation Security). A DTAHE scheme satisfies simulation security if for all $\lambda$, the following holds:

There is a stateful PPT algorithm $\mathcal{C} = (\mathcal{C}_1, \mathcal{C}_2)$ such that for any PPT adversary $\mathcal{A}$, the following experiments
$Expt_{\mathcal{A},Real}(1^\lambda)$ and $Expt_{\mathcal{A},Ideal}(1^\lambda)$ are indistinguishable:
\begin{itemize}
\item $Expt_{\mathcal{A},Real}(1^\lambda)$:
    \begin{enumerate}
    \item The adversary outputs $U$ and $V$ where $|U| =n$ and $|V| = t$ specify an access structure.
    \item The challenger runs \[parm \leftarrow Setup(1^\lambda),\] \[(sk_u,pk_u)\leftarrow KeyGen(parm),\]
    \[({\{ {e_{v,u}}\} _{v \in U}}) \leftarrow Share\left( \begin{gathered}
  parm,{\{ p{k_v}\} _{v \in U\backslash \{ u\} }}, \hfill \\
  t,s{k_u} \hfill \\
\end{gathered}  \right),\]
    \[pk \leftarrow CombKey(parm,\{pk_u\}_{u \in U}),\] and provides $(parm, pk, \{\{e_{v,u}\}_{v \in U}\}_{u \in U})$ to $\mathcal{A}$.
    \item $\mathcal{A}$ outputs a set $S^* \subseteq U$ with $|S^*| = t-1$ and messages $\{m_u\}_{u \in U}$.
    \item The challenger provides $\mathcal{A}$ the shares $\{\{s_{u,v}\}_{v \in U}\}_{u \in S^*}$ in $\{\{e_{u,v}\}_{v \in U}\}_{u \in S^*}$ and a cipher set  \[\{c_u \leftarrow Enc(parm, pk, m_u)\}_{u \in U}.\]
    \item $\mathcal{A}$ issues a polynomial number of adaptive queries of the form $(S \subseteq U, \{c_u\}_{u \in U^*}, \{\alpha_u\}_{u \in U^*})$ where $U^* \subseteq U$. For each query, the challenger computes $\hat{c} \leftarrow Eval(parm, \{c_u\}_{u \in U^*}, \{\alpha_u\}_{u \in U^*})$ and provides $\mathcal{A}$ \[\{\hat{m}_u \leftarrow ParDec(parm, \hat{c},\{e_{u,v}\}_{v \in U})\}_{u \in S}.\]
    \item At the end of the experiment, $\mathcal{A}$ outputs a distinguishing bit $b$.
    \end{enumerate}
\item $Expt_{\mathcal{A},Ideal}(1^\lambda)$:
\begin{enumerate}
    \item The adversary outputs $U$ and $V$ where $|U| =n$ and $|V| = t$ specify an access structure.
    \item The challenger runs \[(parm, pk, \{\{e_{v,u}\}_{v \in U}\}_{u \in U}, st) \leftarrow \mathcal{C}_1(1^\lambda,U,t)\] and provides $(parm, pk, \{\{e_{v,u}\}_{v \in U}\}_{u \in U})$ to $\mathcal{A}$.
    \item $\mathcal{A}$ outputs a set $S^* \subseteq U$ with $|S^*| = t-1$ and messages $\{m_u\}_{u \in U}$.
    \item The challenger provides $\mathcal{A}$ shares $\{\{s_{u,v}\}_{v \in U}\}_{u \in S^*}$ and ciphers \[\{c_u \leftarrow Enc(parm, pk, m_u)\}_{u \in U}.\]
    \item $\mathcal{A}$ issues a polynomial number of adaptive queries of the form $(S \subseteq U, \{c_u\}_{u \in U^*}, \{\alpha_u\}_{u \in U^*})$ where $U^* \subseteq U$.  For each query, the challenger runs
    \[{\{ {\hat m_u}\} _{u \in S}} \leftarrow {\mathcal{C}_2}\left( \begin{gathered}
  S,{\{ {c_u}\} _{u \in {U^*}}},{\{ {\alpha _u}\} _{u \in {U^*}}}, \hfill \\
  {\{ {m_u}\} _{u \in U}},{\{ {c_u}\} _{u \in U}},st \hfill \\
\end{gathered}  \right)\]
    and provides $\{\hat{m}_u\}_{u \in S}$ to $\mathcal{A}$.
    \item At the end of the experiment, $\mathcal{A}$ outputs a distinguishing bit $b$.
    \end{enumerate}
\end{itemize}
\end{definition}

\section{Secure Linear Aggregation Protocol}
We provides two protocols in this section. The first is a basic protocol showing how to embed a DTAHE scheme to the secure aggregation framework in \cite{B17}. The second is the secure linear aggregation protocol against an active adversary.

\subsection{A Basic Protocol}
As shown in Fig. \ref{pro1}, initially a server $S_1$ runs $parm \leftarrow Setup(1^\lambda)$ to provide a system-wide parameters for all users, and each user runs $(sk_u,pk_u)\leftarrow KeyGen(parm)$ to produce their key pairs. The server and users then runs a four-round protocol $\pi$ with an agreed threshold value $t$. We next focus on a period $e$ to show the protocol.
\begin{figure}[htbp]
  \centering
  % Requires \usepackage{graphicx}
  \includegraphics[width=0.35\textwidth]{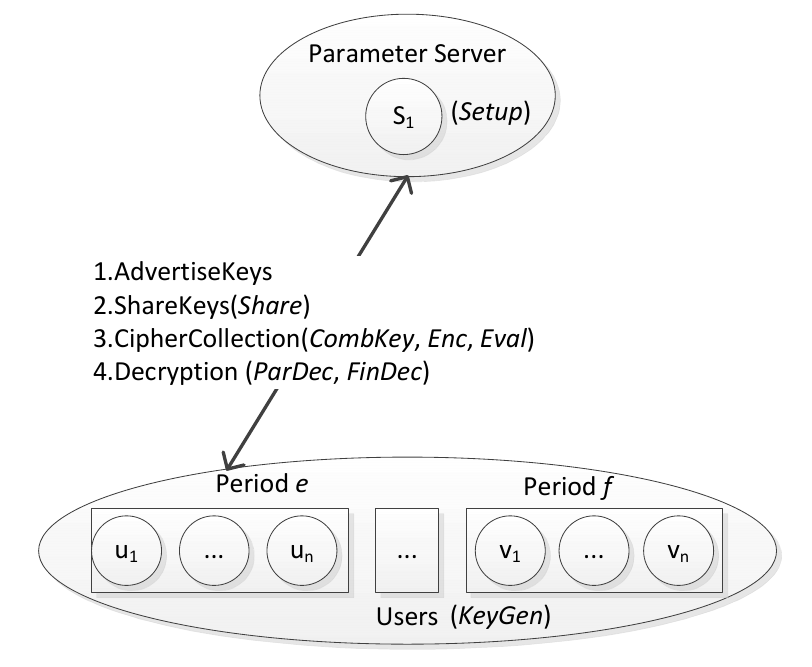}\\
  \caption{Basic Aggregation Protocol from DTAHE}\label{pro1}
\end{figure}
\begin{enumerate}
\item Round 1 (AdvertiseKeys). When a user $u$ is active, it packs a message as $m_{u,1} = (u, pk_{u})$ and sends the message to the server $S_1$. The user then waits for the first response from $S_1$ or stops.

    The server $S_1$ makes a set $U_e^1 = \emptyset$. After it receives a message $m_{u,1}=(u, pk_{u})$, it sets $U_e^1 = U_e^1 \cup \{u\}$. When $|U_e^1| > t$, $S_1$ packs a message as $m_{S,1}=\{m_{u,1}\}_{u \in U_e^1}$. The message $m_{S,1}$ is broadcasted to all the users in $U_e^1$ as their responses.
\item Round 2 (ShareKeys). When a user $u$ receives $m_{S,1}$, it makes a $U_e^1$ set from $m_{S,1}$ and checks $|U_e^1| \geq t$. If the verification passes, the user executes as follows:
    \begin{enumerate}
    \item It makes a public key set $\{pk_v\}_{v \in U_e^1\backslash\{u\}}$ from $m_{S,1}$ and produce encrypted shares by
    \[{\{ {e_{v,u}}\} _{v \in U_e^1}} \leftarrow Share\left( \begin{gathered}
  parm,{\{ p{k_v}\} _{v \in U_e^1\backslash \{ u\} }}, \hfill \\
  t,s{k_u} \hfill \\
\end{gathered}  \right).\]
    \item It packs $m_{u,2} = (u, \{(v, e_{v,u})\}_{v\in U_e^1 \backslash \{u\}})$ and sends the message to $S_1$. Then $u$ waits for the second response or stops.
    \end{enumerate}

    The server $S_1$ builds a set $U_e^2$ in the same way as $U_e^1$ satisfying $U_e^2 \subseteq U_e^1$. When $|U_e^2| > t$, $S_1$ packs a message $m_{S,2,v} = (\{(u, e_{v,u})\}_{u \in U_e^2\backslash \{v\}})$ for each user $v \in U_e^2$, and sends them to the users in $U_e^2$.

\item Round 3 (CipherCollection).  When a user $u$ receives a message $m_{S,2,u}$, it makes a set $U_e^2$ from $m_{S,2,u}$. If $|U_e^2| \geq t$, $u$ executes as follows:
    \begin{itemize}
    \item It runs $pk \leftarrow CombKey(parm, \{pk_{u}\}_{u \in U_e^2})$.
    \item It runs $c_u \leftarrow Enc(parm, pk, m_u)$ to compute the cipher of its private input $m_u$.
    \item It packs a message $m_{u,3}=(u,c_u)$ and sends the message to $S_1$. Then $u$ waits for the third response or stops.
    \end{itemize}

    The server $S_1$ builds a set $U_e^3$ in the same way as $U_e^1$ satisfying $U_e^3 \subseteq U_e^2$. It runs \[\hat{c} \leftarrow Eval(parm, \{c_u\}_{u \in U_e^3}, \{\alpha_u\}_{u \in U_e^3})\] and sends $m_{S,3} = \hat{c}$ to users in $U_e^3$.

\item Round 4 (Decryption). When a user $u$ receives $m_{S,3}$, it runs \[\hat{m}_u \leftarrow ParDec(parm,\hat{c},\{e_{u,v}\}_{v \in U_e^2}).\] $u$ then packs a message $m_{u,4}= (u,\hat{m}_u)$ and sends the message to the server $S_1$. The user $u$ finishes the client protocol $\pi_u$ for the period $e$ and stops.

    The server $S_1$ builds a set $U_e^4$ in the same way as $U_e^1$ satisfying $U_e^4 \subseteq U_e^3$. If $|U_e^4| \geq t$, it runs \[m \leftarrow FinDec(parm,\hat{c},\{\hat{m}_u\}_{u \in U_e^4})\] to get $m$ as an aggregated value. The server finishes the server side protocol $\pi_S$ for the period $e$ and waits for the next period.
\end{enumerate}

\begin{remark}
The basic protocol uses the framework in \cite{B17} to tolerate user dropping out. It needs the sever to be honest. A malicious server may simply return a cipher $c_u$ as $m_{S,3}$ to decrypt the cipher.
\end{remark}

\subsection{The Secure Linear Aggregation Protocol}
Bonawitz et al. \cite{B17} use a certificate authority (CA) and an extra consistency round to defend against an active adversary. We also use a CA but keep our protocol four rounds. We introduce a smart contract in a blockchain to encourage the server $S_1$ and users to be honest.

The blockchain is described as $\vec{\sigma}_{\lambda_t+1} = \Upsilon(\vec{\sigma}_{\lambda_t}, Tx)$ \cite{Ethereum} where $\Upsilon$ is a state transition function, $\vec{\sigma}_{\lambda_t}$ is the system state in a block height $\lambda_t$, and $Tx$ is a transaction in the system. A transaction could be described as \[Tx = (from, to, value, data, aux, sig)\] where $from$ and $to$ fields are the sender and receiver accounts of the transaction, $value$ field is the values to be transferred from the sender to the receiver, $data$ field is arbitrary data of the sender, $aux$ field is the auxiliary information used in the blockchain and $sig$ field is the sender's signature. The instance of the blockchain certainly could be the Ethereum \cite{Ethereum}. Any other blockchain system supporting state transition is fine.

Initially, the server $S_1$ has an account $acc_S$ and a user $u$ has an account $acc_u$ in the blockchain. The server should deploy a smart contract $EncCheck$ on the blockchain which will have an account $acc_E$ after deployment.
\begin{definition}
($EncCheck$). The smart contract includes three functions. A function $Init$ is for the server $S_1$ to deposit values and set parameters. A function $Record$ is for a user to record their encrypted inputs. A function $Check$ is for the server $S_1$ and users to check the correctness of a protocol transcript.
\begin{itemize}
\item $Init$: If the transaction $Tx$ is signed by the server $S_1$, the $data$ field of the transaction includes ``$Init$'' as the function name, and a threshold value $t$ and the number of periods $prds$ supported by the contract as the arguments, the function checks a deposit variable $Dep$ of the server $S_1$. If $Dep < MinValue*prds$, it checks that $Dep+value\geq MinValue*prds$ where $MinValue$ is the smallest deposit value for a period. If the check fails, it stops. Otherwise, it transfers values from the server account to the contract account, and stores $(acc_S, t, Dep, prds)$ in the contract.
\item $Record$: If the $data$ field of the transaction includes ``$Record$'' as the function name, and an epoch session id $esid_u$ and a cipher $c_{acc_u}$ as the arguments, the function stores $(esid_u, acc_u, c_{acc_u})$ in the contract.
\item $Check$: If the transaction $Tx$ is singed by the server $S_1$, the $data$ field of the transaction includes ``$Check$'' as the function name, and a termination flag $draw$, an epoch session id $esid_S$, a cipher $c_S$, a list of user accounts $L_A$, and coefficients $\{\alpha_{acc_u}\}_{acc_u \in L_A}$ as the arguments, the function checks whether $esid_S$ is new. If the $esid_S$ is used before in another transaction $Tx^\prime$, it checks whether the two transactions are the same. If they are different, the check fails. If they are the same, the function stops. The function forms a set \[A_A = \{acc_u: acc_u \in LA \wedge esid_u = esid_S\}.\]  If $esid_S$ is new, it checks that the size $|A_A| \geq t$, and \[c_S = Eval(parm, \{c_{acc_u}\}_{acc_u \in A_A},\{\alpha_{acc_u}\}_{acc_u \in A_A}).\] If any check fails, the value $MinValue$ is shared by users in the $A_A$ set. Otherwise, it updates $prds=prds-1$, and if $draw$ is $true$, it transfers the deposit back to the account of the server after six new blocks. When the deposit is cleared, the contract suicides.
\end{itemize}
\end{definition}

Now the server $S_1$ has two tasks to initialize the protocol. At first, the server $S_1$ runs $parm \leftarrow Setup(1^\lambda)$ to provide system-wide parameters for all users. Secondly, the server $S_1$ produces a transaction as \[Tx_{S,1} = (acc_S,acc_E,value,Init||t||prds,aux,sig)\] to initialize the deployed smart contract.

A user $u$ has three tasks to participate in the protocol II. At first, it runs $(sk_u, pk_u) \leftarrow KeyGen(parm)$ to produce protocol keys. Secondly, it produces a transaction \[Tx_{u,1} = (acc_u,acc_u,0,pk_u,aux,sig)\] and sends the transaction to the blockchain to store its public keys. Thirdly, the user $u$ should produce a signature key pair $(sk_{Sigu}, vk_{Sigu})\leftarrow SIG.Gen(1^\lambda)$ using a signature scheme $(SIG.Gen,SIG.Sign,SIG.Ver)$ that is existential unforgeable against adaptive chosen messages (EUF-CMA). And then $u$ applies for a certificate $cert_u$ of the verifying key $vk_{Sigu}$ from a CA.

The CA and miners are auxiliary entities in the protocol. A CA is used to defend against a ``Sybil'' attack where an adversary may create many blockchain accounts to act as users. The miners of a blockchain receive transactions, pack them into blocks, reach a consensus on a block and update the global state. The public keys of users, parameters of the server $S_1$, and deposited values of the server $S_1$ are states of the blockchain maintained by the miners. The whole picture of the protocol is in Fig. \ref{p2}. We next focus on a period $e$ to show the protocol.
\begin{figure}[htbp]
  \centering
  % Requires \usepackage{graphicx}
  \includegraphics[width=0.45\textwidth]{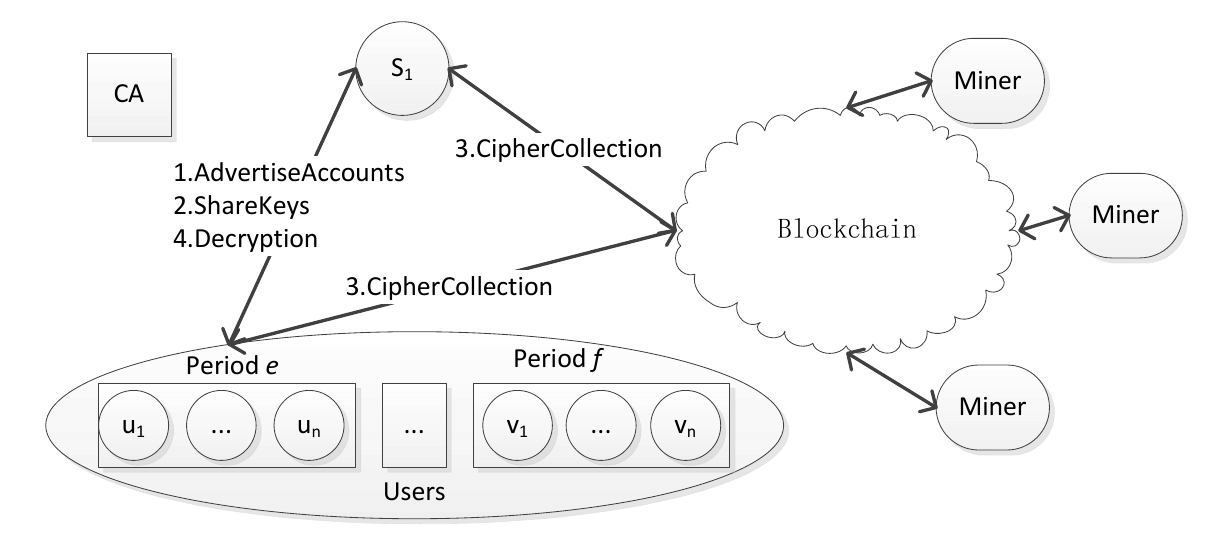}\\
  \caption{The Secure Linear Aggregation Protocol}\label{p2}
\end{figure}

\begin{enumerate}
\item Round 1 (AdvertiseAccounts). When a user $u$ is active, it looks up the deposit $Dep$, supported periods $prds$ and the threshold $t$ in the smart contract account $acc_E$ indexed by the server account $acc_S$. If $Dep$, $t$, or $prds$ are too small for the user's local policy, the user simply stops. Otherwise, it produces a signature \[\sigma_u \leftarrow SIG.Sign(sk_{Sigu}, acc_u||T_u)\] where $T_u$ is a time stamp of the user. It packs a message as $m_{u,1} = (u, acc_u, T_u, \sigma_u, cert_u)$ and sends the message to the server $S_1$. The user $u$ then waits for the first response from $S_1$ or stops.

    The server $S_1$ builds a set $U_e^1$ as in the basic protocol. It then packs a message as $m_{S,1}=(\{m_{u,1}\}_{u \in U_e^1},esid)$ where $esid$ is a session number for the period. The message $m_{S,1}$ is broadcasted to all the users in $U_e^1$ set as their response.
\item Round 2 (ShareKeys). When a user $u$ receives $m_{S,1}$, it parses $m_{S,1}$ to extract the identities of users which form a set $U_e^1$. If checks that $|U_e^1| \geq t$, and the time stamps, signatures and certificates are correct. If the verifications pass, $u$ looks up the blockchain to find public keys of users in $U_e^1$ by their accounts which form a set $\{pk_v\}_{v \in U_e^1 \backslash \{u\}}$, and executes as in the basic protocol.

    The server $S_1$ executes in the same way as in the basic protocol to produce $m_{S,2,v}$ for a user $v$.

\item Round 3 (CipherCollection).  When a user $u$ receives a message $m_{S,2,u}$, it makes a set $U_e^2$ from $m_{S,2,u}$. If $|U_e^2| \geq t$, it executes in the same way as in the basic protocol to compute $pk$ and $c_u$. Then $u$ creates a transaction \[Tx_{u,2}=(acc_u,acc_E,0,Record||esid||c_u,aux,sig)\] and sends the transaction to the blockchain. The user $u$ then waits for a response transaction of the server $S_1$ from the blockchain network or stops.

    The server $S_1$ makes a set $U_e^3 = \emptyset$. After it receives a transaction $Tx_{u,2}$, if the transaction includes the same $esid$ as the current session number, $S_1$ sets $U_e^3 = U_e^3 \cup \{u\}$ where the identity $u$ is indexed by the $acc_u$. It then runs \[\hat{c} \leftarrow Eval(parm, \{c_u\}_{u \in U_e^3}, \{\alpha_u\}_{u \in U_e^3}),\] and packs a transaction
    \[T{x_{S,2}} = \left( \begin{gathered}
  ac{c_S},ac{c_E},0,Check||false|| \hfill \\
  esid||\hat c||{\{ ac{c_u}\} _{u \in U_e^3}},aux,sig \hfill \\
\end{gathered}  \right)\]
    and sends the transaction to the blockchain.

\item Round 4 (Decryption). When a user $u$ receives $Tx_{S,2}$, it builds a $U_e^3$ set indexed by $\{acc_u\}_{u \in U_e^3}$. If $U_e^3 \subseteq U_e^2$, $|U_e^3| \geq t$, and the $esid$ field is the current session number, $u$ then executes in the same way as in the basic protocol to interact with the server $S_1$.

    The server $S_1$ executes in the same way as in the basic protocol to get an aggregated result.
\end{enumerate}

When the FL process is to stop, $S_1$ may send a $Tx_{S,2}$ with the $true$ termination flag. The $EncCheck$ contract will delay the withdraw action until six blocks are generated. This is intentionally for users to check the correctness of the server. Before the deposit is given back to the server, a user could send $Tx_{S,2}$ received in an period to the contract. If there are different $Tx_{S,2}$ transactions with the same $esid$ field, the deposit will be shared by users in that period.

\begin{remark}
In a period, two transactions are transferred by the blockchain network. Miners, the parameter server and users will receive transactions, parse them, and process them independently.
\end{remark}

\subsection{Security Analysis}
We prove the security of the secure linear aggregation protocol against an active adversary. In a high level, we divide the global outputs into four views, and analyze the advantages of the adversary as a new view appears.

\begin{theorem}
Suppose that a DTAHE scheme is sematic and simulation secure, a signature scheme $SIG$ is EUF-CMA secure, and the secret sharing scheme in the DTAHE has a privacy property \cite{BGGJK}. Let $t$ be the threshold value, $n_c$ be the maximal number of corrupted users in a period, and $\lambda_m$ be the minimal number of private inputs to make the final aggregated value hide one input. For any uncorrupted user $u$, against an adversary $\mathcal{A}$, if $\lambda_m \geq t - n_c$ and the deposit of the server $S_1$ in the smart contract $EncCheck$ has not lost, the protocol satisfies $GO_{\pi, e, \{S_1\}\cup U_e, \mathcal{A}}(\vec{x}) \approx GO_{\pi, e, \{S_1\}\cup U_e, \mathcal{A}}(\vec{x}^\prime)$.
\end{theorem}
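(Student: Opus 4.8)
The plan is to prove the claim by a hybrid argument that transforms $GO_{\pi,e,\{S_1\}\cup U_e,\mathcal{A}}(\vec{x})$ into $GO_{\pi,e,\{S_1\}\cup U_e,\mathcal{A}}(\vec{x}^\prime)$ through a short sequence of games, organised around the four views that the adversary accumulates over the four rounds. First I would observe that Rounds~1 and~2 produce the message $m_{u,1}=(u,acc_u,T_u,\sigma_u,cert_u)$ and the encrypted shares $\{e_{v,u}\}$, neither of which depends on the private input $m_u$: the shares are a function of $sk_u$, not of the message. Hence the first two views are distributed identically under $\vec{x}$ and $\vec{x}^\prime$, and all distinguishing power must originate in the ciphertext of Round~3 and in the partial decryptions and final aggregate of Round~4.

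Before touching the cryptographic views I would pin down the behaviour of the active adversary that controls $S_1$. Define the bad event that some uncorrupted user runs $ParDec$ on a cipher $\hat{c}$ that is not a legitimate evaluation $Eval(parm,\{c_v\}_{v\in A_A},\{\alpha_v\}_{v\in A_A})$ over a genuine set $A_A$ with $|A_A|\ge t$. I would bound this event by $negl(\lambda)$: by the $Record$ function every aggregated cipher is pinned on-chain, by $Check$ the contract rejects any $c_S$ that is not the correct $Eval$ of at least $t$ recorded ciphers, and two conflicting $Tx_{S,2}$ sharing an $esid$ force the deposit to be split among the users; since the hypothesis stipulates that the deposit of $S_1$ has not been lost, the server must have published a single, consistent, well-formed transaction. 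The $EUF$-$CMA$ security of $SIG$ together with the CA binding simultaneously rules out $\mathcal{A}$ forging an honest user's Round~1 message, so the genuine members of $A_A$ really did submit their own ciphers. With at most $n_c$ corrupted parties, the numerical hypothesis relating $\lambda_m$, $t$ and $n_c$ then guarantees that at least $\lambda_m$ honest inputs survive in any aggregate the uncorrupted users agree to decrypt. This is exactly where the isolation attack noted after the basic protocol (a malicious server returning $\hat{c}=c_u$) is provably excluded, and it is the crux of the whole argument.

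Conditioned on the complement of the bad event, I would peel off the cryptographic views by reductions. Using Simulation Security (Definition~\ref{simdef}) I would first replace every honest user's real partial decryption $\hat{m}_u\leftarrow ParDec(parm,\hat{c},\{e_{u,v}\}_{v\in U_e^2})$ by the output of the simulator $\mathcal{C}=(\mathcal{C}_1,\mathcal{C}_2)$, which is computed from public data and the claimed aggregate alone; the corrupted users, numbering at most $n_c\le t-1$, instantiate the set $S^\ast$, and the privacy of the secret sharing guarantees the revealed shares leak nothing further, so this game hop costs only $negl(\lambda)$. With the decryptions now depending on the plaintexts only through the aggregate, I would invoke Semantic Security (Definition~\ref{semdef}) to switch the plaintext underlying $c_u$ from $m_u$ to $0$, the corrupted set instantiating $S$ with $|S|<t$ so that $\mathcal{A}$ cannot itself decrypt under the combined key $pk$. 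The sole surviving difference between the two worlds is then the numerical value of the aggregate $m=\sum_{v}\alpha_v m_v$, and the $\lambda_m$-hiding property established above makes this value statistically independent of whether $u$ contributed $m_u$ or $0$.

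Finally I would chain the hops, unwinding the simulation and semantic substitutions on the $\vec{x}^\prime$ side in reverse order, to conclude $GO_{\pi,e,\{S_1\}\cup U_e,\mathcal{A}}(\vec{x})\approx GO_{\pi,e,\{S_1\}\cup U_e,\mathcal{A}}(\vec{x}^\prime)$. The main obstacle, and the part that genuinely needs the blockchain machinery rather than standard encryption arguments, is the second paragraph: bounding the bad event for a malicious server. The difficulty the authors already flag in the introduction, namely that a target user may have dropped out before any live consistency-check round, means the guarantee cannot rest on interaction among users; it must be discharged entirely by the on-chain $Record$/$Check$ logic together with the economic deposit, and formalising cleanly the implication ``deposit not lost $\Rightarrow$ well-formed aggregate over $\ge t$ genuine ciphers'' is where the care is required.
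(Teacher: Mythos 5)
Your proposal follows essentially the same route as the paper: both decompose the adversary's transcript round by round, use semantic security to switch $c_u$ from $m_u$ to $0$, simulation security plus the privacy of the secret sharing to neutralize the shares and partial decryptions, EUF-CMA signatures with CA certificates to block Sybil/replay in Round~1, and the on-chain $Check$ logic with the ``deposit not lost'' hypothesis to force $|A_A|\ge t$ so that $\lambda_m\ge t-n_c$ honest inputs remain in any decrypted aggregate. Your packaging of the server-misbehaviour analysis as a single up-front ``bad event'' conditioned away, with the cryptographic hops ordered explicitly afterwards, is a somewhat cleaner presentation of the same argument rather than a different proof.
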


{\noindent \it{Proof}}:
A global output $GO_{\pi,e, \{S_1\}\cup U_e, \mathcal{A}}(\vec{x})$ consists of four views $\{view_i\}_{i \in \{1,...4\}}$. The $view_1$ includes the outputs of the initialization procedures and the first round of the server $S_1$ and users. The views $\{view_2,view_3,view_4\}$ include the outputs of their corresponding rounds. We consider two global outputs $GO$ and $GO^\prime$ with inputs $\vec{x}$ and $\vec{x}^\prime$. Note that the only difference of the two inputs are the inputs of the user $u$. We denote the four views of $GO^\prime$ as $\{view_i^\prime\}_{i \in \{1,...4\}}$.
\begin{itemize}
\item Initially, $view_1$ and $view_1^\prime$ include the same certificates, public keys and accounts in the initialization procedure of the protocol $\pi$.
\item When $\pi$ executes, $view_1$ and $view_1^\prime$ include message-signature pairs of users and the first response of the server. $view_2$ and $view_2^\prime$ include encrypted shares and the second response of the server. Since private inputs of the user $u$ are not involved, the two global outputs are indistinguishable until now.
\item The $view_{3}$ and $view_3^\prime$ consist of two kinds of transactions, separately. Transactions of users in $U_e^3$ could form a ciphertext set $\{c_u\}_{u \in U_e^3}$. The transaction of the server $S_1$ includes $\hat{c}$. In the $view_3$, the ciphertext of the target user $u$ is $c_u \leftarrow Enc(parm, pk, m_u)$, and the evaluated ciphertext of the server $S_1$ is $\hat{c} \leftarrow Eval(parm, \{c_u\}_{u \in U_e^3}, \{\alpha_u\}_{u \in U_e^3})$. In the $view_3^\prime$, the ciphertext of the target user $u$ is $c_u \leftarrow Enc(parm, pk, 0)$, and the evaluated ciphertext of the server is produced in the same way. Now the chances of $\mathcal{A}$ increases.
    \begin{itemize}
    \item $\mathcal{A}$ could exploit the differences of the two views $view_3$ and $view_3^\prime$. Since the only difference is the message $m_u$ and the message $0$, if the DTAHE is sematic secure, $\mathcal{A}$ has only negligible advantages to distinguish the two views.
    \item $\mathcal{A}$ could exploit the combined views of $view_3 \cup view_2$ and $view_3^\prime \cup view_2^\prime$. Suppose that $\mathcal{A}$ could get the shared secrets in $view_2$ and $view_2^\prime$. Then $\mathcal{A}$ could decrypt $c_u$ by the $ParDec$ and $FinDec$ algorithms. However, since the DTAHE is simulation secure, the secret shares are protected well. $\mathcal{A}$ has only negligible advantages to distinguish the two combined views.
    \item $\mathcal{A}$ could exploit the combined views of $view_3 \cup view_2 \cup view_1$ and $view_3^\prime \cup view_2^\prime \cup view_1^\prime$ in the two global outputs.

        At first, Suppose that $\mathcal{A}$ corrupts users in $C^*$ and $|C^*| = n_c$. It then knows secret keys $\{sk_{u}\}_{u \in C^*}$ which could be used to decrypt ciphers in $view_2$. The decrypted shares may help $\mathcal{A}$ to decrypt $c_u$. However, since we assume $n_c < t$ and the secret sharing scheme has a privacy property\cite{BGGJK}, this event happens negligibly.

        Secondly, $\mathcal{A}$ may make special $view_1$ and $view_1^\prime$. $\mathcal{A}$ could establish $t$ blockchain accounts and store public keys in each account. Then $\mathcal{A}$ acts as users to cheat the target user $u$. However, since messages in $view_1$ and $view_1^\prime$ are signed by users who have certificates from a CA, the accounts of $\mathcal{A}$ could not be identified by the user $u$. And since we explicitly include a time stamp in the first message of a user, $\mathcal{A}$ could not replay messages of users in other periods. To satisfy the claims, the signature scheme $SIG$ should be EUF-CMA secure.
    \end{itemize}
\item $view_{4}$ and $view_4^\prime$ consists of partially decrypted ciphers and the final aggregated value. $\mathcal{A}$ has more chances:
    \begin{itemize}
    \item $\mathcal{A}$ could exploit the partially decrypted ciphers to discover secret shares. Since the DTAHE is simulation secure, $\mathcal{A}$ has only negligible advantages.
    \item $\mathcal{A}$ could exploit the final aggregated value. An obvious strategy is that in the transaction of the server $Tx_{S,2}$, only $c_u$ is included. To make the strategy more practical, $\mathcal{A}$ may aggregate some ciphers of corrupted users and the target user. Then the final output is the sum of $n_c+1$ inputs.

        The strategy is a possible way to get inputs of a user at the cost of the deposit in the smart contract. Since $Tx_{S,2}$ comes from the blockchain network, miners will execute the $Check$ function in the $EncCheck$ smart contract. The function takes $\{acc_u\}_{u \in U_e^3}$ as $L_A$, and checks the correctness of the evaluation of the server $S_1$. The smart contract builds a $A_A \subseteq L_A$ and checks that $|A_A| \geq t$. So if $n_c+1 < t$, the check fails and the server $S_1$ will be punished. To avoid the punishment, $\mathcal{A}$ should prepare a $L_A$ set with $|L_A|\geq n_c + \lambda_m \geq t$.

        $\mathcal{A}$ may try to withdraw their deposit before the penalty is paid. However, the smart contract needs extra six blocks to confirm the request of the server. The six-block waiting time is the last chance of users to check the behaviours of the server. Users are encouraged to send the received transaction of the server to the blockchain since they are the possible beneficiaries.
    \end{itemize}
\end{itemize}
In summary, if the cryptographic primitives are secure, $\lambda_m \geq t - n_c$, and the server pays no penalties, the adversary has only negligible advantage to distinguish the two outputs $GO$ and $GO^\prime$. $\hfill$

\begin{remark}
We explicitly introduce a parameter $\lambda_m$ to describe the number of private inputs of uncorrupted users in a period. The parameter is in fact used in \cite{B17} implicitly. The consistency check round \cite{B17} makes sure that at least $t$ private inputs are summed. If $n_c$ inputs are known by an adversary $\mathcal{A}$, then in their protocol $\lambda_m \geq t - n_c$.
\end{remark}

\section{A DTAHE Instance}
Since all known construction methods of DTAHE schemes have some weaknesses. We provide a new construction method to produce a lattice based DTAHE scheme for an interactive protocol.

\subsection{The Instance}
We use the BFV scheme \cite{FV12,B12} as a basic building block which is implemented in the SEAL library \cite{seal}. We need a sematic secure hybrid encryption scheme \cite{HHK10} $HPKE=(HPKE.Gen,HPKE.Enc,HPKE.Ver)$ to encrypt shares. A lattice based $HPKE$ instance will give us a fully lattice DHAHE instance that may be post quantum secure. We also need the Shamir secret sharing scheme denoted by $SS = (SS.Split, SS.Recover)$.

\begin{enumerate}
\item $Setup(1^\lambda) \rightarrow parm$: It takes as input a security parameter $\lambda$, produces a parameter set $parm = (d,f(x),h,R,R_h,\chi,\mu,\vec{a},l,\lambda)$, where $d$ is the degree depending on $\lambda$ of a cyclotomic polynomial $f(x)$, $h\geq 2$ is an integer depending on $\lambda$, $R$ is a ring $R = \mathbb{Z}[x]/(f(x))$, $R_h$ is the set of polynomials in $R$ with coefficients in $\mathbb{Z}_h$, $\chi$ here is in fact defined as discrete Gaussian distribution, $\mu$ is a uniform distribution, $\vec{a}$ is uniformly selected from $R_h$ as $\vec{a} \leftarrow R_h$, and $l$ is an integer depending on $\lambda$.
\item $KeyGen(parm) \rightarrow (sk_u, pk_u)$: It selects $\vec{s}_u \leftarrow R_3$ and samples $\vec{e}_u \leftarrow \chi$. It sets $sk_{u,0} = \vec{s}_u$ and $pk_{u,0} = [-(\vec{a}\cdot \vec{s}_u+\vec{e}_u)]_h$. It runs $(pk_{u,1},sk_{u,1})\leftarrow HPKE.Gen(1^\lambda)$. The output is $sk_u = (sk_{u,0},sk_{u,1})$ and $pk_u=(pk_{u,0},pk_{u,1})$.
\item $Share(parm, \{pk_{v}\}_{v \in U \backslash\{u\}}, t, sk_{u}) \rightarrow \{e_{v,u}\}_{v \in U \backslash \{u\}}$: It samples $\vec{e}_{u,2} \leftarrow \chi$, computes $n = |\{pk_{v}\}_{v \in U \backslash\{u\}}|+1$, sets $ss_u = \{sk_{u,0},\vec{e}_{u,2}\}$, and for each coefficient $ss_{u,i}$ of the elements in $ss_u$, computes $\{\vec{s}_{v,u,i}\}_{v \in U} \leftarrow SS.Split(ss_{u,i}, n, t, h)$. For all the shares to $v \in U \backslash \{u\}$, it computes a cipher $e_{v,u}=HPKE.Enc(pk_{v,1},\{\vec{s}_{v,u,i}\}_{i \in |ss_u|*d})$.
\item $CombKey(parm, \{pk_{u}\}_{u \in U}) \rightarrow pk$: It computes $pk=[\sum_{u \in U}pk_{u,0}]_h$.
\item $Enc(parm, pk, m_u) \rightarrow c_u$: It selects $\vec{u}_u \leftarrow R_3$, $\vec{e}_{u,0},\vec{e}_{u,1} \leftarrow \chi$, computes $c_{u,0}=[\vec{a}\cdot\vec{u}_u+\vec{e}_{u,0}]_h$ and $c_{u,1} = [pk\cdot \vec{u}_u+\vec{e}_{u,1}+\lfloor h/l \rfloor\cdot m_u]_h$. It sets $c_u = (c_{u,0}, c_{u,1})$.
\item $Eval(parm, \{c_u\}_{u \in U},\{\alpha_u\}_{u \in U})\rightarrow \hat{c}$: It computes $\hat{c}_0= [\sum_{u \in U}\alpha_{u}c_{u,0}]_h$, $\hat{c}_1=[\sum_{u \in U}\alpha_{u}c_{u,1}]_h$ and sets $\hat{c}= (\hat{c}_0,\hat{c}_1)$.
\item $ParDec(parm,\hat{c},\{e_{u,v}\}_{v \in U})\rightarrow \hat{m}_u$: It decrypts the shares for the user $u \in U$ from the user $v \in U$ as $\vec{s}_{u,v}\leftarrow HPKC.Dec(sk_{u,1},e_{u,v})$. It parses the shares of coefficients as shares of elements in $R$, sets $(\vec{se}_{u,v},\vec{ssk}_{u,v})=\vec{s}_{u,v}$, then computes $\hat{m}_u = [\hat{c}_0 \cdot \sum_{v \in U}\vec{ssk}_{u,v}+\sum_{v \in U}\vec{se}_{u,v}]_h$.
\item $FinDec(parm,t, \hat{c},\{\hat{m}_u\}_{u \in V}) \rightarrow m$: It recovers $cs=[\sum_{u \in V}li_{u}\hat{m}_u]_h$ where $li_{u}$ is the Lagrange coefficient of the user $u$ with respect to the user set $V$, and computes the final output $m = [\lfloor\frac{l \cdot [\hat{c}_1+cs]_h}{h}\rceil]_l$.
\end{enumerate}
\begin{remark}
If the data dimension of the message $m_u$ is greater than $d$, the number of noise samples $\vec{e}_{u,2} \in R$ increases.
\end{remark}
\subsection{Security Analysis}
The security of our scheme could be reduced to a variant of the classical ring version decisional learning with errors (RLWE) problem \cite{LPR10,FV12}. The variant is named $n$-Decision-RLWE problem.
\begin{definition}
\label{simsecure}
($n$-Decision-RLWE).
For a random set $\{\vec{s}_i \in R_h\}_{i \in \{1,\ldots,n\}}$ and a distribution $\chi$ over $R$, denote with $A_{\{\vec{s}_i \in R_h\}_{i \in \{1,\ldots,n\}},\chi,\mu}$ the distribution by choosing a uniformly random element $\vec{a} \leftarrow R_h$ and $n$ noise term $\{\vec{e}_i \leftarrow \chi \}_{i \in \{1,\ldots,n\}}$ and outputting $(\vec{a},\{[\vec{a}\cdot\vec{s}_i+\vec{e}_i]_h\}_{i \in \{1,\ldots,n\}})$. The problem is then to distinguish between the distribution $A_{\{\vec{s}_i \in R_h\}_{i \in \{1,\ldots,n\}},\chi,\mu}$ and a uniform distribution $\mu$ over $R_h^{n+1}$.
\end{definition}

By a hybrid argument, one could conclude that if an adversary has an advantage at least $\epsilon_{n\text{-}RLWE}$ to solve the $n$-Decision-RLWE problem, the adversary has an advantage at least $\frac{1}{n}\epsilon_{n\text{-}RLWE}$ to solve the classical RLWE problem in \cite{LPR10,FV12}.

The first proof is about the evaluation correctness in the definition \ref{corrdef}.
\begin{theorem}
\label{corr}
Assume that $U$ is the user set, $\chi$ is $B$-bounded and the maximal infinity norm of elements in the set $\{\alpha_u\}_{u \in U}$ is $A$, the evaluation of the DTAHE is correct with probability $1$ if $|U|B(1+\delta_RA(1+2\delta_R|U|)) < \frac{h}{2l}$.
\end{theorem}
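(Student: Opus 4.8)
The plan is to push a freshly evaluated ciphertext through $ParDec$ and $FinDec$, extract the plaintext-bearing term $\lfloor h/l\rfloor\sum_{u\in U}\alpha_u m_u$, gather every remaining contribution into one noise polynomial $\vec{E}$, and then verify that the hypothesis $|U|B(1+\delta_RA(1+2\delta_R|U|))<\frac{h}{2l}$ keeps $\|\vec{E}\|$ strictly below the decryption threshold $\frac{h}{2l}$. A key observation, which explains why the statement claims probability exactly $1$, is that all randomness here is bounded deterministically: $\chi$ is supported on $[-B,B]$, the secrets $\vec{s}_u$ and the encryption randomness $\vec{u}_u$ lie in $R_3$ (so their infinity norm is at most $1$), and $\|\alpha_u\|\le A$. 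Hence a worst-case bound on $\|\vec{E}\|$ never fails, and correctness holds with certainty rather than merely overwhelmingly.

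First I would unfold the algebra. Setting $\vec{s}=\sum_{u\in U}\vec{s}_u$ and $\vec{e}_{pk}=\sum_{u\in U}\vec{e}_u$, $CombKey$ produces $pk=[-\vec{a}\vec{s}-\vec{e}_{pk}]_h$, and $Eval$ yields $\hat{c}_0=[\sum_u\alpha_u(\vec{a}\vec{u}_u+\vec{e}_{u,0})]_h$ together with $\hat{c}_1=[\sum_u\alpha_u(pk\,\vec{u}_u+\vec{e}_{u,1}+\lfloor h/l\rfloor m_u)]_h$. The decisive step is to show that $FinDec$ reconstructs $cs=[\hat{c}_0\vec{s}+\vec{e}_2]_h$ with $\vec{e}_2=\sum_{v\in U}\vec{e}_{v,2}$: because $Share$ distributes each coefficient of $\vec{s}_v$ and $\vec{e}_{v,2}$ by Shamir sharing modulo $h$, and because Lagrange interpolation acts coefficient-wise and linearly, the combination $\sum_{u\in V}li_u\hat{m}_u$ recovers $\hat{c}_0\sum_v\vec{s}_v+\sum_v\vec{e}_{v,2}$ exactly, with no wrap-around since the shared secrets have norm well below $h/2$. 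Adding $\hat{c}_1$ then cancels the masking term $\vec{a}\vec{s}$, leaving $[\hat{c}_1+cs]_h=\lfloor h/l\rfloor\sum_u\alpha_u m_u+\vec{E}$ with
\[\vec{E}=-\vec{e}_{pk}\sum_u\alpha_u\vec{u}_u+\sum_u\alpha_u\vec{e}_{u,1}+\vec{s}\sum_u\alpha_u\vec{e}_{u,0}+\vec{e}_2.\]

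The remaining work is to bound $\|\vec{E}\|$ term by term with the triangle inequality and the expansion factor $\delta_R$. Using $\|\vec{e}_{pk}\|\le|U|B$, $\|\vec{s}\|\le|U|$, $\|\sum_u\alpha_u\vec{u}_u\|\le|U|\delta_RA$ and $\|\sum_u\alpha_u\vec{e}_{u,0}\|\le|U|\delta_RAB$, the two cross terms each contribute at most $|U|^2\delta_R^2AB$, the term $\sum_u\alpha_u\vec{e}_{u,1}$ contributes at most $|U|\delta_RAB$, and $\|\vec{e}_2\|\le|U|B$; summing gives precisely $\|\vec{E}\|\le|U|B(1+\delta_RA(1+2\delta_R|U|))$. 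Since $FinDec$ scales by $l/h$ and rounds, the noise perturbs the scaled quantity by at most $\frac{l}{h}\|\vec{E}\|<\frac12$, so the rounding returns $\sum_u\alpha_u m_u$ modulo $l$ as required.

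I expect the main obstacle to be the reconstruction identity rather than the norm bookkeeping: one must argue carefully that summing the per-user partial decryptions $\hat{m}_u$ with Lagrange coefficients over a qualified set $V$ with $|V|\ge t$ genuinely recovers both $\hat{c}_0\vec{s}$ and $\vec{e}_2$ simultaneously, that the coefficient-wise Shamir reconstruction modulo $h$ commutes with multiplication by the ring element $\hat{c}_0$, and that the secrets and shared noise never exceed $h/2$ so that no modular wrap-around corrupts the interpolation. A secondary subtlety is the floor in $\lfloor h/l\rfloor$: strictly it introduces a lower-order correction of the form $-\frac{(h\bmod l)}{h}\sum_u\alpha_u m_u$, which vanishes when $l\mid h$ and is otherwise absorbed in the standard BFV fashion, leaving $\frac{h}{2l}$ as the effective half-gap threshold matched by the hypothesis.
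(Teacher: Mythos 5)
Your proposal is correct and follows essentially the same route as the paper: you reconstruct $cs=[\hat{c}_0\sum_u sk_{u,0}+\sum_u\vec{e}_{u,2}]_h$ via linearity of Shamir/Lagrange, cancel the $\vec{a}\cdot\vec{s}$ masking term, and your noise polynomial $\vec{E}$ is term-for-term the paper's $NS$, with the identical bound $|U|B(1+\delta_RA(1+2\delta_R|U|))$. The only differences are presentational (you additionally flag the $\lfloor h/l\rfloor$ rounding correction, which the paper silently absorbs).
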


\begin{proof}
Since \[\hat{m}_u = [\hat{c}_0 \cdot \sum_{v \in U}\vec{ssk}_{u,v}+\sum_{v \in U}\vec{se}_{u,v}]_h,\] we have the equation (\ref{pareq}) due to the Lagrange interpolation.
\begin{equation}
\begin{split}\label{pareq}
 cs&= [\sum_{u \in V}li_{u}\cdot (\hat{c}_0 \cdot \sum_{v \in U}\vec{ssk}_{u,v}+\sum_{v \in U}\vec{se}_{u,v})]_h \\
  &= [\sum_{u \in V}li_{u}\cdot(\hat{c}_0 \cdot \sum_{v \in U}\vec{ssk}_{u,v})\\
  &+\sum_{u \in V}li_{u}\cdot(\sum_{v \in U}\vec{se}_{u,v})]_h\\
  &= [\hat{c}_0 \cdot (\sum_{u \in V}li_{u}\cdot(\sum_{v \in U}\vec{ssk}_{u,v}))+\sum_{u \in U}\vec{e}_{u,2}]_h\\
  &= [\hat{c}_0 \cdot \sum_{u \in U}sk_{u,0}+\sum_{u \in U}\vec{e}_{u,2}]_h\\
\end{split}
\end{equation}

Let $X = [\hat{c}_1+cs]_h$. Since \[\hat{c}_0= [\sum_{u \in U}\alpha_uc_{u,0}]_h,\] \[\hat{c}_1=[\sum_{u \in U}\alpha_uc_{u,1}]_h,\] \[c_{u,0}=[\vec{a}\cdot\vec{u}_u+\vec{e}_{u,0}]_h,\] \[c_{u1} = [pk\cdot \vec{u}_u+\vec{e}_{u,1}+\lfloor h/l \rfloor\cdot m_u]_h,\] \[pk=[\sum_{u \in U}pk_{u,0}]_h,\] and \[pk_u = [-(\vec{a}\cdot sk_{u,0}+\vec{e}_u)]_h,\] we have the equation \ref{cormid}.
\begin{equation}
\begin{split}\label{cormid}
X &= [\hat{c}_1+\hat{c}_0 \cdot \sum_{u \in U}sk_{u,0}+\sum_{u \in U}\vec{e}_{u,2}]_h \\
 &=[\sum_{u \in U}\alpha_u(pk\cdot \vec{u}_u+\vec{e}_{u,1}+\lfloor h/l \rfloor\cdot m_u)\\
 &+\sum_{u \in U}\alpha_u(\vec{a}\cdot\vec{u}_u+\vec{e}_{u,0}) \cdot (\sum_{u \in U}sk_{u,0})\\
 &+\sum_{u \in U}\vec{e}_{u,2}]_h\\
\end{split}
\end{equation}

Let \[NS_0=\sum_{u \in U}\alpha_u\vec{e}_{u,1}+(\sum_{u \in U}\alpha_u\vec{e}_{u,0}) \cdot (\sum_{u \in U}sk_{u,0})+\sum_{u \in U}\vec{e}_{u,2},\] and \[MP=\sum_{u \in U}\alpha_u\lfloor h/l \rfloor\cdot m_u.\] Let \[Y = [X-NS_0-MP]_h.\] We then have the equation \ref{corfin}.
\begin{equation}
\begin{split}\label{corfin}
Y &= [\sum_{u \in U}\alpha_u(pk\cdot \vec{u}_u)+\sum_{u \in U}\alpha_u(\vec{a}\cdot\vec{u}_u) \cdot (\sum_{u \in U}sk_{u,0})]_h \\
 &=[\sum_{u \in U}\alpha_u(\vec{u}_u \cdot (\sum_{u \in U}-(\vec{a}\cdot sk_{u0}+\vec{e}_u)))\\
 &+\vec{a}\cdot (\sum_{u \in U}\alpha_u\vec{u}_u) \cdot (\sum_{u \in U}sk_{u,0})]_h\\
 &=[-\sum_{u \in U}\alpha_u\vec{u}_u \cdot \sum_{u \in U}\vec{e}_u]_h\\
\end{split}
\end{equation}
Let $NS=NS_0-\sum_{u \in U}\alpha_u\vec{u}_u \cdot \sum_{u \in U}\vec{e}_u$, then $X=[NS+MP]_h$. We then have the equation \ref{meq}.
\begin{equation}
\begin{split}\label{meq}
m &= [\lfloor\frac{l \cdot [\hat{c}_1+cs]_h}{h}\rceil]_l\\
  &=[\lfloor\frac{l \cdot X}{h}\rceil]_l\\
  &=[\lfloor\frac{l \cdot (NS+\sum_{u \in U}\alpha_u\lfloor h/l \rfloor\cdot m_u)}{h}\rceil]_l\\
  &=[\sum_{u \in U}\alpha_um_u]_l+[\lfloor\frac{l \cdot NS}{h}\rceil]_l\\
\end{split}
\end{equation}

If $NS < \frac{h}{2l}$, the above decryption is correct. Since $\vec{u}_u, sk_{u0} \leftarrow R_3$, $\vec{e}_{u0}$, $\vec{e}_{u1}$, $\vec{e}_u$, $\vec{e}_{u2} \leftarrow \chi$, the maximal infinity norm of elements in the set $\{\alpha_u\}_{u \in U}$ is $A$, the infinity norm of $NS$ is
\begin{equation}
\begin{split}
||NS|| &\leq \delta_RA|U|B+\delta_R^2A|U|^2B+|U|B+\delta_R^2A|U|^2B\\
  &=|U|B(1+\delta_RA(1+2\delta_R|U|))\\
\end{split}
\end{equation}
\end{proof}

The second proof is for the privacy of messages in the definition \ref{semdef}.
\begin{theorem}
If there is an adversary $\mathcal{A}$ with advantage $\epsilon_{sem}$ to make the experiment $Expt_{\mathcal{A},Sem}(1^\lambda)$ output $1$, one could construct a challenger to break the $n$-decision-RLWE problem with an advantage $\frac{1}{2}\epsilon_{sem}$ under the condition that the secret sharing scheme $SS$ has the privacy property \cite{BGGJK} and the hybrid encryption scheme $HPKE$ is sematic secure.
\end{theorem}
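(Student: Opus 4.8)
The plan is to build a reduction $\mathcal{B}$ that turns any semantic-security adversary $\mathcal{A}$ into a distinguisher for the $n$-Decision-RLWE problem, organized as two game hops whose combined cost accounts for the factor $\frac{1}{2}$. First I would have $\mathcal{B}$ receive an instance $(\vec{a},\{b_i\}_{i \in \{1,\ldots,n\}})$ and use $\vec{a}$ as the public ring element in $parm$. The key observation for the first hop is that the $n$ user public keys $pk_{u_i,0} = [-(\vec{a}\cdot\vec{s}_i+\vec{e}_i)]_h$ are exactly $n$ RLWE samples sharing $\vec{a}$, so $\mathcal{B}$ sets $pk_{u_i,0} = [-b_i]_h$ and forms $pk = [-\sum_i b_i]_h$. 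When the instance is the real distribution $A_{\{\vec{s}_i\},\chi,\mu}$ the keys are distributed exactly as in $Expt_{\mathcal{A},Sem}$; when it is uniform, every $pk_{u_i,0}$, and hence $pk$, is uniform over $R_h$.

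The second step copes with the fact that, once the challenge is embedded, $\mathcal{B}$ no longer knows the secret keys $sk_{u,0}=\vec{s}_i$ and so cannot honestly run $Share$. Here I would invoke the two auxiliary assumptions. The encrypted shares $\{e_{v,u}\}$ destined for uncorrupted users are ciphertexts under their $HPKE$ keys $pk_{v,1}$, whose secret keys $\mathcal{A}$ never receives; by the semantic security of $HPKE$ these may be replaced by encryptions of dummy plaintexts without detectable change. For the set $S$ with $|S|<t$ that $\mathcal{A}$ later opens, I would appeal to the privacy property of the Shamir scheme $SS$: any fewer than $t$ shares are jointly uniform and independent of the secret, so $\mathcal{B}$ may sample the opened $\{s_{u,v}\}$ uniformly at random, and they stay perfectly consistent with the unknown $\vec{s}_i$. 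After these hops the entire key-and-share view is simulatable with no knowledge of the secret keys.

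For the second hop, with $pk$ now uniform, I would randomize the challenge ciphertexts. The components $\{c_{u,0}=[\vec{a}\cdot\vec{u}_u+\vec{e}_{u,0}]_h\}_{u\in U}$ form a fresh family of $n$ RLWE samples, this time with the per-user encryption randomness $\vec{u}_u$ as secrets, so a second application of $n$-Decision-RLWE lets me argue that the pair $(c_{u,0},c_{u,1})$ is indistinguishable from uniform and therefore independent of the message bit $b$. The reduction then flips a coin to decide which of the two hops to embed the received instance into, picks a random $b$, encrypts $m_{u,b}$, and outputs ``real'' iff $\mathcal{A}$'s guess matches $b$; in the real branch $\mathcal{A}$ retains its advantage $\epsilon_{sem}$ while in the uniform branch it is zero, and averaging over the two embeddings yields the claimed $\frac{1}{2}\epsilon_{sem}$ advantage against $n$-Decision-RLWE.

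I expect the main obstacle to be the second hop rather than the share simulation. Making $pk$ uniform does not by itself randomize $c_{u,1}=[pk\cdot\vec{u}_u+\vec{e}_{u,1}+\lfloor h/l\rfloor\cdot m_u]_h$, because this component binds the uniform $pk$ to the \emph{same} secret $\vec{u}_u$ already used in $c_{u,0}$; casting the two components as a single $n$-Decision-RLWE instance with a two-row public element, and checking that the noise terms $\vec{e}_{u,0},\vec{e}_{u,1}$ make the real and uniform cases genuinely indistinguishable, is the delicate part. By contrast, the share-simulation hops are routine once the $|S|<t$ threshold and the $HPKE$ security are in hand.
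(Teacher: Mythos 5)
Your reduction follows the same core route as the paper's: embed the $n$ RLWE samples that share the public element $\vec{a}$ into the $n$ user public keys $pk_{u,0}$, and use the semantic security of $HPKE$ together with the $(<t)$-privacy of the Shamir scheme to hand the adversary a consistent view of the shares without knowing the secrets (the paper realizes this by literally running $Share$ on the secret $0$; your ``sample the opened shares uniformly'' is the same argument phrased differently). Where you diverge is the final step. The paper performs a single embedding and then simply asserts that when the instance is uniform the messages are ``masked by random values,'' extracting the $\frac{1}{2}\epsilon_{sem}$ from the guessing strategy; it never addresses the point you correctly isolate, namely that a uniform $pk$ does not by itself randomize $c_{u,1}=[pk\cdot\vec{u}_u+\vec{e}_{u,1}+\lfloor h/l\rfloor\cdot m_u]_h$, because the same $\vec{u}_u$ also appears in $c_{u,0}$. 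Your second hop fills this gap in the standard LPR/BFV way, but note that the object you then need --- $n$ secrets $\vec{u}_u$, each with \emph{two} samples under the pair $(\vec{a},pk)$ --- is not literally the $n$-Decision-RLWE problem as the paper defines it (one shared public element, one sample per secret), so you would have to either extend that definition to a two-row variant or invoke a separate assumption; your coin flip over the two hops then gives a $\frac{1}{2}$ loss against whichever problem the instance was embedded into, which only loosely matches the paper's claimed $\frac{1}{2}\epsilon_{sem}$ against $n$-Decision-RLWE itself. In short, your argument is more complete than the paper's own proof, at the price of resting on a slightly stronger (though entirely standard) assumption than the one named in the theorem.
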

\begin{proof}
With $|U|$ and $t$, the challenger samples a $|U|$-decision-RLWE instance $(x_0,\{x_u\}_{u \in U})$. It embeds the problem instance into the DTAHE instance as follows: \[parm \leftarrow Setup(1^\lambda),\]\[parm=parm\backslash\{\vec{a}\}\cup\{x_0\},\] \[(sk_u,pk_u)\leftarrow KeyGen(parm),\] \[sk_{u,0}=0; pk_{u,0}=x_u,\]\[\{e_{v,u}\}_{v \in U\backslash\{u\}} \leftarrow Share(parm, \{pk_v\}_{v \in U \backslash\{u\}}, t, sk_u),\] \[pk \leftarrow CombKey(parm,\{pk_u\}_{u \in U}).\] It then provides $(parm, pk, \{\{e_{v,u}\}_{v \in U\backslash\{u\}}\}_{u \in U})$ to $\mathcal{A}$.

The challenger plays with $\mathcal{A}$ by $\{sk_{u,1}\}_{u \in U}$ until $\mathcal{A}$ outputs $b^\prime$.

If the $HPKE$ scheme is sematic secure, the ciphers $\{e_{v,u}\}_{v \in U\backslash\{u\}}$ leak nothing about shares. Then from the privacy property of the $SS$ scheme, if $|S|<t$, $\mathcal{A}$ can not distinguish a secret $sk_{u,0}$ from zero. So $\mathcal{A}$ should produce an educated guess $b^\prime$.

The strategy of the challenger is to use the guess of $\mathcal{A}$. If the $Expt_{\mathcal{A},Sem}(1^\lambda)$ outputs $0$, the challenger believes that the $|U|$-decision-RLWE instance is a uniform random sample from $R_h^{n+1}$.

When the input is indeed a uniform random sample from $R_h^{n+1}$, the advantage of $\mathcal{A}$ is simple negligible since the messages are masked by random values. Otherwise, the adversary has an advantage $\epsilon_{sem}$ by assumption. So the advantage of the challenger is $\frac{1}{2}\epsilon_{sem}$.
\end{proof}

The third proof is for the privacy of secret keys and shares in the definition \ref{simdef}.
\begin{theorem}
If the secret sharing scheme $SS$ has the privacy property \cite{BGGJK} and the hybrid encryption scheme $HPKE$ is sematic secure, the adversary $\mathcal{A}$ has negligible advantage to distinguish the two experiments $Exp_{\mathcal{A},Real}(1^\lambda)$ and $Expt_{\mathcal{A},Ideal}(1^\lambda)$.
\end{theorem}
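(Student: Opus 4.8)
The plan is to exhibit a stateful simulator $\mathcal{C}=(\mathcal{C}_1,\mathcal{C}_2)$ and then bound the distinguishing advantage by a short chain of hybrids, where each step is charged to exactly one hypothesis: the semantic security of $HPKE$ handles the encrypted shares, and the privacy property of $SS$ handles the opened shares. First I would specify $\mathcal{C}_1(1^\lambda,U,t)$: it runs $Setup$ and $KeyGen$ for every $u\in U$ exactly as in the real game, forms $pk$ via $CombKey$, and produces Shamir shares of each $sk_{u,0}$ and each $\vec{e}_{u,2}$ honestly, encrypting them with $HPKE$ to obtain the $\{e_{v,u}\}$. Crucially, it retains all secret material---the $\{sk_u\}$, every sharing polynomial and every share---inside $st$. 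When $\mathcal{A}$ later announces the corrupt set $S^*$ with $|S^*|=t-1$, $\mathcal{C}$ opens the plaintext shares $\{s_{u,v}\}_{u\in S^*}$ directly from $st$, so no equivocation is required.

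Next I would specify $\mathcal{C}_2$ on a query $(S,\{c_u\}_{u\in U^*},\{\alpha_u\}_{u\in U^*})$. It evaluates $\hat{c}\leftarrow Eval(\cdot)$ and, using the secrets in $st$, computes the exact combined value $v^\ast=[\hat{c}_0\cdot\sum_{w\in U}sk_{w,0}+\sum_{w\in U}\vec{e}_{w,2}]_h$. For the corrupt indices $u\in S\cap S^*$ it returns the genuine partial decryptions, computable from the revealed shares. For each honest index $u\in S\setminus S^*$ it outputs $\hat{m}_u$ by Lagrange interpolation: the $t-1$ corrupt partial decryptions together with $v^\ast$ at the point $0$ determine a unique degree-$(t-1)$ polynomial over the relevant ring, and $\mathcal{C}_2$ returns its value at $u$. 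Because this equals the honest $ParDec$ output, the simulated responses are \emph{identically} distributed to the real ones regardless of the number of adaptive queries; in particular, any leakage of $\sum_{w}sk_{w,0}$ induced by revealing $v^\ast$ across queries with different $\hat{c}_0$ is present equally in both worlds and cannot help $\mathcal{A}$.

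The hybrid sequence I would run is then: (i) the real experiment; (ii) replace every ciphertext $e_{v,u}$ whose recipient $v$ is honest by an encryption of a zero share, which is indistinguishable by the semantic security of $HPKE$ since $\mathcal{A}$ never holds $sk_{v,1}$ for honest $v$ and therefore never decrypts these ciphers; (iii) replace the honest contributions to the shared secrets inside $st$ by values independent of the true keys, which is unobservable because the only shares $\mathcal{A}$ sees in the clear are the $t-1$ corrupt ones, and the $SS$ privacy property makes any $t-1$ shares independent of the shared secret. Experiment (iii) is exactly $Expt_{\mathcal{A},Ideal}(1^\lambda)$, and summing the per-step bounds yields a total advantage that is negligible.

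The step I expect to be the main obstacle is reconciling the \emph{adaptive} choice of $S^*$ with the two reductions. Since $\mathcal{C}_1$ must commit to $pk$ and to all $\{e_{v,u}\}$ \emph{before} $\mathcal{A}$ reveals which users are corrupt, the honest/corrupt partition that governs hybrid (ii) and hybrid (iii) is fixed only afterwards; the reduction to $HPKE$ must therefore guess (or embed a challenge per key and abort on a wrong guess), incurring a polynomial loss, or appeal to a multi-key/multi-challenge form of $HPKE$ security. A second delicate point is the absence of any smudging noise in $ParDec$: each $\hat{m}_u$ is a deterministic Shamir share of the exact $v^\ast$, so the simulator cannot hide behind statistical noise flooding and must reproduce $v^\ast$ on the nose, which is why $\mathcal{C}_2$ is forced to retain the combined secret in $st$. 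Making hybrid (iii) rigorous---verifying that, despite the many interpolation-consistent partial decryptions handed out, $\mathcal{A}$'s direct view of the underlying key shares never exceeds the $t-1$ privacy threshold of $SS$---is the part of the argument I would write out most carefully.
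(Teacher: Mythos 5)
There is a genuine gap, and it sits exactly where you placed your confidence: the claim that ``no equivocation is required.'' Your $\mathcal{C}_2$ computes the combined value $v^\ast=[\hat{c}_0\cdot\sum_{w}sk_{w,0}+\sum_{w}\vec{e}_{w,2}]_h$ directly from the secret keys retained in $st$, and your hybrid (iii) then claims you may swap the honest key/share material for independent values, charging the step to $SS$ privacy. But $SS$ privacy only covers the $t-1$ shares opened in the clear; it says nothing about the partial decryptions, which are functions of the \emph{full} secret. If in hybrid (iii) the value $v^\ast$ is recomputed from the replaced keys, the adversary detects the switch immediately by running $FinDec$ on $t$ partial decryptions and comparing against $\sum_u\alpha_u m_u$ (it chose the $m_u$ itself). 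If instead $v^\ast$ is still computed from the true keys, your final simulator never stops depending on $\{sk_{u,0}\}$ and $\{\vec{e}_{u,2}\}$, so you have not reached an ideal experiment in which the challenger ``does not use the private keys of users or secret shares of users'' --- which is the whole content of the theorem; the literal definition is otherwise satisfied by the trivial simulator and no proof would be needed.

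The missing idea is the paper's hybrid $H_1$: before touching the sharings, re-express each honest partial decryption as
\[\tilde{m}_u = [li_u^{-1}(\lfloor h/l\rfloor\textstyle\sum_{v\in U}\alpha_v m_v + NS - \hat{c}_1 - \sum_{v\in S^*} li_v\hat{m}_v)]_h,\]
i.e.\ equivocate it in terms of the known plaintext combination, the ciphertext component $\hat{c}_1$, the noise $NS$, and the $t-1$ corrupt partial decryptions. Only after this reformulation can the sharings be replaced by sharings of zero (your hybrid (iii), the paper's $H_2$) without breaking decryption consistency. You also omit the last obstruction entirely: $NS$ still contains the term $(\sum_u sk_{u,0})(\sum_u\alpha_u\vec{e}_{u,0})$, and the paper needs a further hybrid $H_3$ replacing $\sum_u sk_{u,0}$ by a fresh $\sum_u\vec{u}_{u,0}\leftarrow R_3$ to sever the last dependence on the keys. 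Your observations about the adaptive choice of $S^*$ forcing a guessing or multi-challenge reduction for $HPKE$, and about the absence of smudging noise, are both apt (the paper glosses over the former), but they do not repair the main argument.
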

\begin{proof}
The proof needs a serial of hybrid experiments between an adversary $\mathcal{A}$ and a challenger.
\begin{itemize}
\item $H_0$: This is the experiment $Exp_{\mathcal{A},Real}(1^\lambda)$ in the definition \ref{simdef}.
\item $H_1$:Same as $H_0$, except that the challenger simulates the $ParDec$ algorithm to produce $\hat{m}_u$ for queries of $\mathcal{A}$. Note that $\mathcal{A}$ has given the challenger a set $S^*$ with the size $|S^*| = t-1$. From $S^*$, the challenger could construct a set $S_C=S\backslash S^*$. For each party $u \in S_C$, $|S^* \cup \{u\}|  = t$. The challenger sets $\hat{m}_u$ as \[\tilde{m}_u = [li_u^{-1}(\lfloor{h/l}\rfloor\sum_{v  \in U}\alpha_vm_v +NS - \hat{c}_1-\sum_{v\in S^*}li_v\hat{m}_v)]_h\] where $NS$ is defined in the theorem \ref{corr}. If $u \in S^*$, the challenger computes $\hat{m}_u$ as in the game $H_0$.

    The correctness of the simulation is obviously since
    \begin{equation}
    \begin{split}\label{dsk}
    \tilde{m}_u &= [li_u^{-1}(MP +NS - (\hat{c}_1+\sum_{v\in S^*}li_v\hat{m}_v))]_h\\
    &=[li_u^{-1}(MP +NS - (\hat{c}_1+cs-li_u\hat{m}_u))]_h \\
    &= [li_u^{-1}(MP +NS - X +li_u\hat{m}_u)]_h\\
    &=\hat{m}_u \\
    \end{split}
    \end{equation}

\item $H_2$: Same as $H_1$, except that the challenger shares zero as \[\{e_{v,u}\}_{v \in U\backslash\{u\}} \leftarrow Share(parm, \{pk_v\}_{v \in U \backslash\{u\}}, t, 0).\] By the privacy property \cite{BGGJK} of the $SS$ scheme and the sematic security of the $HPKE$ scheme, $H_2$ and $H_1$ are indistinguishable.
\item $H_3$: Same as $H_2$, except that $NS$ is replaced by $\tilde{NS}$ as
    \begin{equation}
    \begin{split}
    \tilde{NS} &= NS - (\sum_{u \in U}sk_{u,0})(\sum_{u\in U}\alpha_u\vec{e}_{u,0}) \\
    & + (\sum_{u \in U}\vec{u}_{u,0})(\sum_{u\in U}\alpha_u\vec{e}_{u,0})\\
    \end{split}
    \end{equation}
    where $\vec{u}_{u,0}\leftarrow R_3$.

    Since $sk_{u,0} \leftarrow R_3$, $H_3$ and $H_2$ have the same distribution. In fact, $\tilde{NS}$ may appear in an experiment when the $\{\vec{u}_{u,0}\}_{u \in U}$ happens to be part of the secret keys of users. Now the challenger does not use the private keys of users $\{sk_u\}_{u \in U}$ or secret shares of users in $U$. So the ideal experiment $Expt_{\mathcal{A},Ideal}(1^\lambda)$ could be simulated indistinguishably.
\end{itemize}
\end{proof}

\section{Performance}
We evaluate the communication and computation costs of the secure linear aggregation protocol.
\subsection{Communication}
We have stated that there are mainly three methods \cite{P91,BD10,BGGJK} to construct a DTAHE. We concrete the method in \cite{P91} based on an elliptic curve version ElGamal (EC-ElGamal) scheme, and other methods based on the BFV \cite{B12, FC12} scheme. The user side communication overhead of the secure linear aggregation protocol is calculated as follows:
\begin{equation}
\begin{split}\label{commeq}
\sum_{i \in \{1,\ldots,4\}}m_{u,i} &= (n+2)|u|+(n-1)|e_{v,u}|+|c_u|\\
 &+|\hat{m}_u|+3|acc_u|+|T_u|+|\sigma_u|+|Sig|\\
 &+|cert_u|+|aux|+|Record|+|esid|
\end{split}
\end{equation}
where $n$ is the number of users. Table I shows the main components in the equation (\ref{commeq}). The security parameter $\lambda$ is $128$. An element in EC-ElGamal takes $33$ bytes where one byte is for $y$-coordinate. $LR$ denotes the size of a ring element in $R_h$, $SN$ the number of shares to each user and $LN$ the number of ciphers to encrypt the user input $m_u$. $LR^\prime$ and $LN^\prime$ have the same meaning as $LR$ and $LN$ with $LR \neq LR^\prime$ and $LN \neq LN^\prime$. The method to use Shamir secret sharing in  \cite{BGGJK} is denoted as BGGJK-2, and the other is denoted as BGGJK-1.
\begin{table}[!t]%%!t
\label{tabcomm}
\renewcommand{\arraystretch}{1.3}
\centering
\setlength{\abovecaptionskip}{0pt}%
\setlength{\belowcaptionskip}{10pt}%
\caption{Communication Overheads of Main Components in the Protocol}
\resizebox{0.49\textwidth}{15mm}{
\begin{tabular}{|c|c|c|c|}
\hline
                  & $|e_{v,u}|$             &  $|c_u|$                 &  $|\hat{m}_u|$  \\  \hline
Pedersen\cite{P91}        & $33+32$                 & $66*|m_u|$               & $33*|m_u|$ \\  \hline
BD\cite{BD10}       & $33+LR*SN$              & $2*LR*LN$                &$LR*LN*SN$\\  \hline  %RL==Length of a ring lement module h;SN==share of numbers n-1,t-1
BGGJK-1\cite{BGGJK}    & $33+LR*n^4$             & $2*LR*LN$                &$LR*LN*n^4$\\  \hline
BGGJK-2\cite{BGGJK}    & $33+LR^\prime$          & $2*LR^\prime*LN^\prime$  &$LR^\prime*LN^\prime$\\  \hline
Ours              & $33+LR(1+LN)$           & $2*LR*LN$                &$LR*LN$\\  \hline   %%length of noise
\end{tabular}
}
\end{table}

Fig. \ref{fig_comm} shows the main communication overhead of the protocol on user side with different DTAHE constructions when the number of user increases. We mainly consider the components in Table I. We set $|m_u|=10^5$ for all instances, and set $d=2048$ and $|h|=54$ so that $LR =d*|h|/8= 13824$ bytes and $LN=\lceil |m_u|/d \rceil = 49$. We set $t = \lceil n*2/3 \rceil$ and compute $SN= \left( \begin{array}{l}n-1\\t-1\end{array} \right)$. The values of $LR^\prime$ and $LN^\prime$ are relative to the number of users since the noise element in the $ParDec$ algorithm should be multiplied by $(n!)^2$, which are limited by the Theorem \ref{corr} and the equation (6) in \cite{FV12}.

From Fig. \ref{fig_comm}, we exclude the BD\cite{BD10} and BGGJK-1\cite{BGGJK} methods to distribute many shares to a user. Our method is better than the BGGJK-2\cite{BGGJK} when the user number is greater than $26$. The EC-ElGamal method has the best communication performance when the number of users is greater than $20$.
\begin{figure}[!t]
\centering
\includegraphics[width=2.5in]{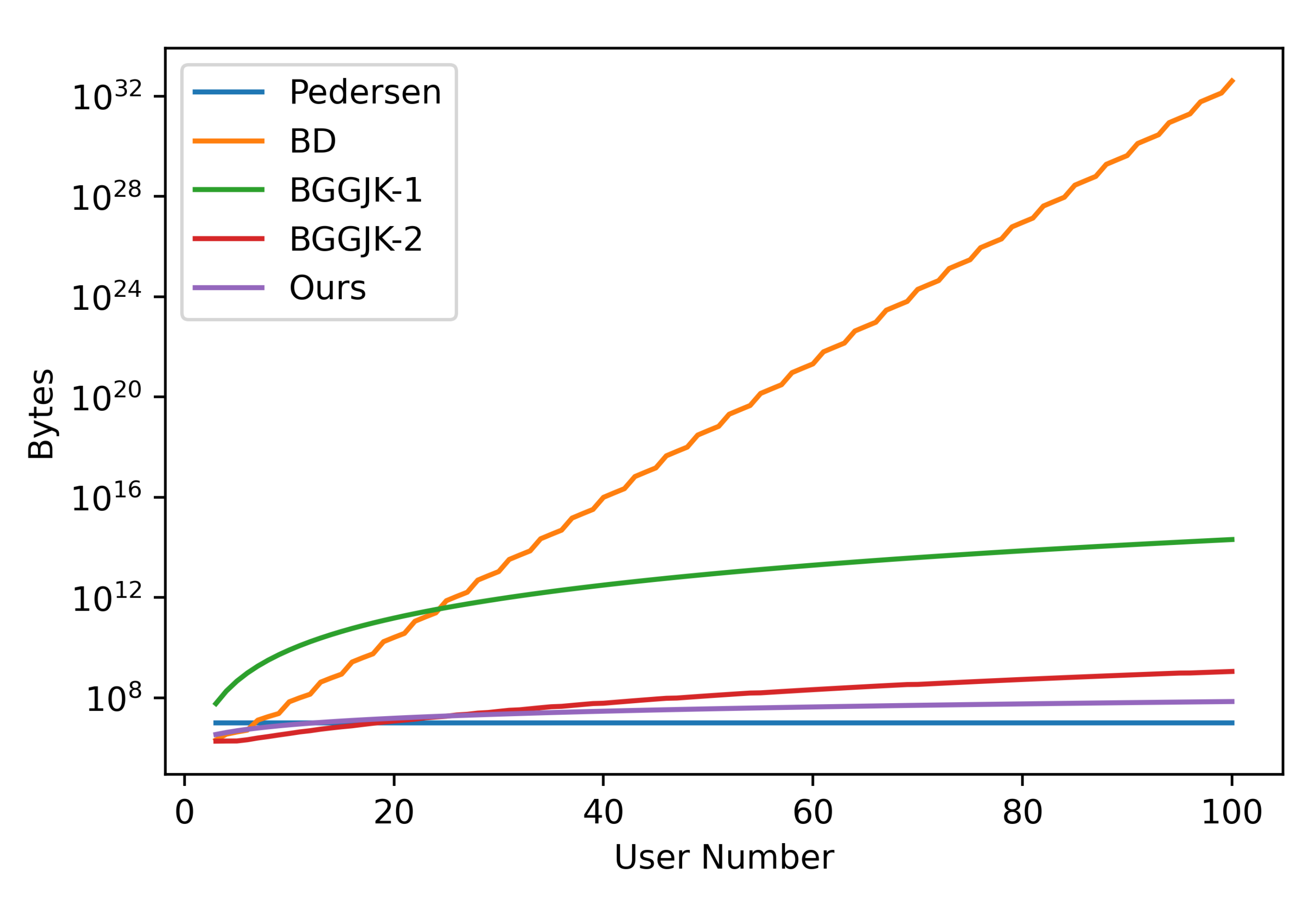}
\caption{Main Communication Overheads of Different DTAHE Schemes in the Protocol}
\label{fig_comm}
\end{figure}

\subsection{Computation}
Since the computation costs of the protocol are dominated by a DTAHE scheme. We give Table II to show the time cost of some DTAHE algorithms. We set the user number as $n = 35$ and implement three DTAHE schemes for comparison. The schemes are implemented in Python. The ``pyOpenSSL'' is used to implement the EC-ElGamal based DTAHE. The small discrete logarithm of a group element is found by the well-known ``Baby-Step-Giant-Step'' method. An open source library ``bfv-python'' is adapted to implement our scheme and the BGGJK-2 method in \cite{BGGJK}. For simplicity, we use a public python module ``eciespy'' as an instance of the $HPKE$ scheme. The CPUs are Intel(R) Core(TM) i7-8550U (1.80GHz, 1.99GHz) and the RAM is 16GB.

We use the ``secp256r1'' curve as the parameter set to implement the EC-ElGamal based DTAHE. For the security parameter $\lambda=128$, we set $d =2048$, $|h|=54$ and $|l|=17$ for our DTAHE scheme. When $n=35$, the scheme constructed by the BGGJK-2 method in \cite{BGGJK} requires $|h| \geq 426$. So we set $d = 16384$ according to the parameter table in \cite{fhestd}. Each data element in $m_u$ in our test occupies $8$ bits. The time costs of the three implementations are listed in the table II, which are measured in seconds. Apparently, a secure linear aggregation protocol with the EC-ElGamal based DTAHE takes the least time on user side. The protocol with our DTAHE scheme takes the least time on server side.

\begin{table}[!t]%%!t
\label{tabcomp}
\renewcommand{\arraystretch}{1.3}
\centering
\setlength{\abovecaptionskip}{0pt}%
\setlength{\belowcaptionskip}{10pt}%
\caption{Computation Costs of Some DTAHE Algorithms with Fixed Number of Users}
\resizebox{0.49\textwidth}{15mm}{
\begin{tabular}{|c|c|c|c|c|c|}
\hline
                  &  $Share$      & $CombKey$ and $Enc$             &  $Eval$                 &  $ParDec$   & $FinDec$  \\  \hline
\cite{P91}        & $0.03$           & $6.95$                 & $14.62$                        & $6.32$ & $441.25$ \\  \hline
\cite{BGGJK}-2    & $4.03$          & $15.48$                 & $2.40$                             &$17.29$  &  $2.98$\\  \hline
Ours              & $17.46$        & $7.50$                   & $1.52$                        &$51.17$    & $1.64$ \\ \hline   %%length of noise
\end{tabular}
}
\end{table}

\section{Conclusion}
This paper shows a secure linear protocol mainly for complex federated learning models. When the communication cost is not the dominate factor, the protocol could be deployed in a federated learning model to protect the private inputs of users. The DTAHE schemes in the protocol may be further optimized to reduce the computation time. For example, parallel computing technologies may reduce the time cost of the EC-ElGamal based DTAHE, and multi-secret sharing schemes may reduce the communication and computation costs of our DTAHE scheme.

% if have a single appendix:
%\appendix[Proof of the Zonklar Equations]
% or
%\appendix  % for no appendix heading
% do not use \section anymore after \appendix, only \section*
% is possibly needed

% use appendices with more than one appendix
% then use \section to start each appendix
% you must declare a \section before using any
% \subsection or using \label (\appendices by itself
% starts a section numbered zero.)
%

%\appendices
%\section{Proof of the First Zonklar Equation}
%Appendix one text goes here.

% you can choose not to have a title for an appendix
% if you want by leaving the argument blank
%\section{}
%Appendix two text goes here.

% use section* for acknowledgment
\section*{Acknowledgment}

This work is supported by the National Key R\&D Program of China (2017YFB0802500), Guangdong Major Project of  Basic and Applied Basic Research(2019B030302008) and Natural Science Foundation of Guangdong Province of China (2018A0303130133).

% Can use something like this to put references on a page
% by themselves when using endfloat and the captionsoff option.
%\ifCLASSOPTIONcaptionsoff
%  \newpage
%\fi

% trigger a \newpage just before the given reference
% number - used to balance the columns on the last page
% adjust value as needed - may need to be readjusted if
% the document is modified later
%\IEEEtriggeratref{8}
% The "triggered" command can be changed if desired:
%\IEEEtriggercmd{\enlargethispage{-5in}}

% references section

% can use a bibliography generated by BibTeX as a .bbl file
% BibTeX documentation can be easily obtained at:
% http://mirror.ctan.org/biblio/bibtex/contrib/doc/
% The IEEEtran BibTeX style support page is at:
% http://www.michaelshell.org/tex/ieeetran/bibtex/
%       \bibliographystyle{IEEEtran}
% argument is your BibTeX string definitions and bibliography database(s)
%\bibliographystyle{plain}
\bibliography{mybib}

% Generated by IEEEtran.bst, version: 1.14 (2015/08/26)
\begin{thebibliography}{10}
\providecommand{\url}[1]{#1}
\csname url@samestyle\endcsname
\providecommand{\newblock}{\relax}
\providecommand{\bibinfo}[2]{#2}
\providecommand{\BIBentrySTDinterwordspacing}{\spaceskip=0pt\relax}
\providecommand{\BIBentryALTinterwordstretchfactor}{4}
\providecommand{\BIBentryALTinterwordspacing}{\spaceskip=\fontdimen2\font plus
\BIBentryALTinterwordstretchfactor\fontdimen3\font minus
  \fontdimen4\font\relax}
\providecommand{\BIBforeignlanguage}[2]{{%
\expandafter\ifx\csname l@#1\endcsname\relax
\typeout{** WARNING: IEEEtran.bst: No hyphenation pattern has been}%
\typeout{** loaded for the language `#1'. Using the pattern for}%
\typeout{** the default language instead.}%
\else
\language=\csname l@#1\endcsname
\fi
#2}}
\providecommand{\BIBdecl}{\relax}
\BIBdecl

\bibitem{MM16}
\BIBentryALTinterwordspacing
H.~B. McMahan, E.~Moore, D.~Ramage, and B.~A. y~Arcas, ``Federated learning of
  deep networks using model averaging,'' \emph{CoRR}, vol. abs/1602.05629,
  2016. [Online]. Available: \url{http://arxiv.org/abs/1602.05629}
\BIBentrySTDinterwordspacing

\bibitem{FJR15}
\BIBentryALTinterwordspacing
M.~Fredrikson, S.~Jha, and T.~Ristenpart, ``Model inversion attacks that
  exploit confidence information and basic countermeasures,'' in
  \emph{Proceedings of the 22Nd ACM SIGSAC Conference on Computer and
  Communications Security}, ser. CCS '15.\hskip 1em plus 0.5em minus
  0.4em\relax New York, NY, USA: ACM, 2015, pp. 1322--1333. [Online].
  Available: \url{http://doi.acm.org/10.1145/2810103.2813677}
\BIBentrySTDinterwordspacing

\bibitem{RA16}
M.~{Al-Rubaie} and J.~M. {Chang}, ``Reconstruction attacks against mobile-based
  continuous authentication systems in the cloud,'' \emph{IEEE Transactions on
  Information Forensics and Security}, vol.~11, no.~12, pp. 2648--2663, Dec
  2016.

\bibitem{CWCY19}
C.~Di, W.~Leye, C.~Kai, and Y.~Qiang, ``Secure federated matrix
  factorization,'' in \emph{FML 2019 : The 1st International Workshop on
  Federated Machine Learning for User Privacy and Data Confidentiality}, 2019.

\bibitem{B17}
\BIBentryALTinterwordspacing
K.~Bonawitz, V.~Ivanov, B.~Kreuter, A.~Marcedone, H.~B. McMahan, S.~Patel,
  D.~Ramage, A.~Segal, and K.~Seth, ``Practical secure aggregation for
  privacy-preserving machine learning,'' in \emph{Proceedings of the 2017 ACM
  SIGSAC Conference on Computer and Communications Security}, ser.
  CCS'17.\hskip 1em plus 0.5em minus 0.4em\relax New York, NY, USA: Association
  for Computing Machinery, 2017, pp. 1175--1191. [Online]. Available:
  \url{https://doi.org/10.1145/3133956.3133982}
\BIBentrySTDinterwordspacing

\bibitem{DLDP16}
\BIBentryALTinterwordspacing
M.~Abadi, A.~Chu, I.~Goodfellow, H.~B. McMahan, I.~Mironov, K.~Talwar, and
  L.~Zhang, ``Deep learning with differential privacy,'' in \emph{Proceedings
  of the 2016 ACM SIGSAC Conference on Computer and Communications Security},
  ser. CCS '16.\hskip 1em plus 0.5em minus 0.4em\relax New York, NY, USA:
  Association for Computing Machinery, 2016, pp. 308--318. [Online]. Available:
  \url{https://doi.org/10.1145/2976749.2978318}
\BIBentrySTDinterwordspacing

\bibitem{WLD20}
K.~{Wei}, J.~{Li}, M.~{Ding}, C.~{Ma}, H.~H. {Yang}, F.~{Farokhi}, S.~{Jin},
  T.~Q.~S. {Quek}, and H.~{Vincent Poor}, ``Federated learning with
  differential privacy: Algorithms and performance analysis,'' \emph{IEEE
  Transactions on Information Forensics and Security}, vol.~15, pp. 3454--3469,
  2020.

\bibitem{RN10}
\BIBentryALTinterwordspacing
V.~Rastogi and S.~Nath, ``Differentially private aggregation of distributed
  time-series with transformation and encryption,'' in \emph{Proceedings of the
  2010 ACM SIGMOD International Conference on Management of Data}, ser. SIGMOD
  ’10.\hskip 1em plus 0.5em minus 0.4em\relax New York, NY, USA: Association
  for Computing Machinery, 2010, p. 735–746. [Online]. Available:
  \url{https://doi.org/10.1145/1807167.1807247}
\BIBentrySTDinterwordspacing

\bibitem{JK12}
M.~Jawurek and F.~Kerschbaum, ``Fault-tolerant privacy-preserving statistics,''
  in \emph{Privacy Enhancing Technologies}, S.~Fischer-H{\"u}bner and
  M.~Wright, Eds.\hskip 1em plus 0.5em minus 0.4em\relax Berlin, Heidelberg:
  Springer Berlin Heidelberg, 2012, pp. 221--238.

\bibitem{JL13}
M.~Joye and B.~Libert, ``A scalable scheme for privacy-preserving aggregation
  of time-series data,'' in \emph{Financial Cryptography and Data Security},
  A.-R. Sadeghi, Ed.\hskip 1em plus 0.5em minus 0.4em\relax Berlin, Heidelberg:
  Springer Berlin Heidelberg, 2013, pp. 111--125.

\bibitem{CANS2014}
\BIBentryALTinterwordspacing
I.~Leontiadis, K.~Elkhiyaoui, and R.~Molva, ``Private and dynamic time-series
  data aggregation with trust relaxation,'' in \emph{Proceedings of the 13th
  International Conference on Cryptology and Network Security - Volume
  8813}.\hskip 1em plus 0.5em minus 0.4em\relax Berlin, Heidelberg:
  Springer-Verlag, 2014, p. 305–320. [Online]. Available:
  \url{https://doi.org/10.1007/978-3-319-12280-9_20}
\BIBentrySTDinterwordspacing

\bibitem{NDSS11}
E.~Shi, T.-H. Chan, E.~Rieffel, R.~Chow, and D.~Song, ``Privacy-preserving
  aggregation of time-series data,'' vol.~2, 01 2011.

\bibitem{FC12}
T.~H.~H. Chan, E.~Shi, and D.~Song, ``Privacy-preserving stream aggregation
  with fault tolerance,'' in \emph{Financial Cryptography and Data Security},
  A.~D. Keromytis, Ed.\hskip 1em plus 0.5em minus 0.4em\relax Berlin,
  Heidelberg: Springer Berlin Heidelberg, 2012, pp. 200--214.

\bibitem{PUDA12}
I.~Leontiadis, K.~Elkhiyaoui, M.~Önen, and R.~Molva, ``Puda – privacy and
  unforgeability for data aggregation,'' Cryptology ePrint Archive, Report
  2015/562, 2015, \url{https://eprint.iacr.org/2015/562}.

\bibitem{LC13}
Q.~Li and G.~Cao, ``Efficient privacy-preserving stream aggregation in mobile
  sensing with low aggregation error,'' in \emph{Privacy Enhancing
  Technologies}, E.~De~Cristofaro and M.~Wright, Eds.\hskip 1em plus 0.5em
  minus 0.4em\relax Berlin, Heidelberg: Springer Berlin Heidelberg, 2013, pp.
  60--81.

\bibitem{L4MZ20}
Y.~{Liu}, Y.~{Liu}, Z.~{Liu}, Y.~{Liang}, C.~{Meng}, J.~{Zhang}, and
  Y.~{Zheng}, ``Federated forest,'' \emph{IEEE Transactions on Big Data}, pp.
  1--1, 2020.

\bibitem{zhuo2020}
H.~H. Zhuo, W.~Feng, Y.~Lin, Q.~Xu, and Q.~Yang, ``Federated deep reinforcement
  learning,'' 2020.

\bibitem{real}
\BIBentryALTinterwordspacing
J.~L.~H. Crawford, C.~Gentry, S.~Halevi, D.~Platt, and V.~Shoup, ``Doing real
  work with {FHE:} the case of logistic regression,'' in \emph{Proceedings of
  the 6th Workshop on Encrypted Computing {\&} Applied Homomorphic
  Cryptography, WAHC@CCS 2018, Toronto, ON, Canada, October 19, 2018}, 2018,
  pp. 1--12. [Online]. Available: \url{https://doi.org/10.1145/3267973.3267974}
\BIBentrySTDinterwordspacing

\bibitem{BD10}
R.~Bendlin and I.~Damg{\aa}rd, ``Threshold decryption and zero-knowledge proofs
  for lattice-based cryptosystems,'' in \emph{Theory of Cryptography},
  D.~Micciancio, Ed.\hskip 1em plus 0.5em minus 0.4em\relax Berlin, Heidelberg:
  Springer Berlin Heidelberg, 2010, pp. 201--218.

\bibitem{BGGJK}
D.~Boneh, R.~Gennaro, S.~Goldfeder, A.~Jain, S.~Kim, P.~M.~R. Rasmussen, and
  A.~Sahai, ``Threshold cryptosystems from threshold fully homomorphic
  encryption,'' in \emph{Advances in Cryptology -- CRYPTO 2018}, H.~Shacham and
  A.~Boldyreva, Eds.\hskip 1em plus 0.5em minus 0.4em\relax Cham: Springer
  International Publishing, 2018, pp. 565--596.

\bibitem{P91}
T.~P. Pedersen, ``A threshold cryptosystem without a trusted party,'' in
  \emph{Advances in Cryptology --- EUROCRYPT '91}, D.~W. Davies, Ed.\hskip 1em
  plus 0.5em minus 0.4em\relax Berlin, Heidelberg: Springer Berlin Heidelberg,
  1991, pp. 522--526.

\bibitem{FV12}
J.~Fan and F.~Vercauteren, ``Somewhat practical fully homomorphic encryption,''
  Cryptology ePrint Archive, Report 2012/144, 2012,
  \url{https://eprint.iacr.org/2012/144}.

\bibitem{ZLY19}
X.~Zhu, H.~Li, and Y.~Yu, ``Blockchain-based privacy preserving deep
  learning,'' in \emph{Information Security and Cryptology}, F.~Guo, X.~Huang,
  and M.~Yung, Eds.\hskip 1em plus 0.5em minus 0.4em\relax Cham: Springer
  International Publishing, 2019, pp. 370--383.

\bibitem{BSXHH19}
X.~{Bao}, C.~{Su}, Y.~{Xiong}, W.~{Huang}, and Y.~{Hu}, ``Flchain: A blockchain
  for auditable federated learning with trust and incentive,'' in \emph{2019
  5th International Conference on Big Data Computing and Communications
  (BIGCOM)}, 2019, pp. 151--159.

\bibitem{LHZMZ20}
Y.~{Lu}, X.~{Huang}, K.~{Zhang}, S.~{Maharjan}, and Y.~{Zhang}, ``Blockchain
  empowered asynchronous federated learning for secure data sharing in internet
  of vehicles,'' \emph{IEEE Transactions on Vehicular Technology}, vol.~69,
  no.~4, pp. 4298--4311, 2020.

\bibitem{QGLXYLZ20}
Y.~{Qu}, L.~{Gao}, T.~H. {Luan}, Y.~{Xiang}, S.~{Yu}, B.~{Li}, and G.~{Zheng},
  ``Decentralized privacy using blockchain-enabled federated learning in fog
  computing,'' \emph{IEEE Internet of Things Journal}, vol.~7, no.~6, pp.
  5171--5183, 2020.

\bibitem{WWZLZL20}
J.~{Weng}, J.~{Weng}, J.~{Zhang}, M.~{Li}, Y.~{Zhang}, and W.~{Luo},
  ``Deepchain: Auditable and privacy-preserving deep learning with
  blockchain-based incentive,'' \emph{IEEE Transactions on Dependable and
  Secure Computing}, pp. 1--1, 2019.

\bibitem{PC20}
S.~R. {Pokhrel} and J.~{Choi}, ``Federated learning with blockchain for
  autonomous vehicles: Analysis and design challenges,'' \emph{IEEE
  Transactions on Communications}, pp. 1--1, 2020.

\bibitem{KPBK20}
H.~{Kim}, J.~{Park}, M.~{Bennis}, and S.~{Kim}, ``Blockchained on-device
  federated learning,'' \emph{IEEE Communications Letters}, vol.~24, no.~6, pp.
  1279--1283, 2020.

\bibitem{WGWJYL20}
Q.~{Wang}, Y.~{Guo}, X.~{Wang}, T.~{Ji}, L.~{Yu}, and P.~{Li}, ``Ai at the
  edge: Blockchain-empowered secure multiparty learning with heterogeneous
  models,'' \emph{IEEE Internet of Things Journal}, pp. 1--1, 2020.

\bibitem{SVMSHG19}
\BIBentryALTinterwordspacing
K.~Sarpatwar, R.~Vaculin, H.~Min, G.~Su, T.~Heath, G.~Ganapavarapu, and
  D.~Dillenberger, \emph{Towards Enabling Trusted Artificial Intelligence via
  Blockchain}.\hskip 1em plus 0.5em minus 0.4em\relax Cham: Springer
  International Publishing, 2019, pp. 137--153. [Online]. Available:
  \url{https://doi.org/10.1007/978-3-030-17277-0_8}
\BIBentrySTDinterwordspacing

\bibitem{ALLL19}
\BIBentryALTinterwordspacing
S.~Awan, F.~Li, B.~Luo, and M.~Liu, ``Poster: A reliable and accountable
  privacy-preserving federated learning framework using the blockchain,'' in
  \emph{Proceedings of the 2019 ACM SIGSAC Conference on Computer and
  Communications Security}, ser. CCS ’19.\hskip 1em plus 0.5em minus
  0.4em\relax New York, NY, USA: Association for Computing Machinery, 2019, p.
  2561–2563. [Online]. Available:
  \url{https://doi.org/10.1145/3319535.3363256}
\BIBentrySTDinterwordspacing

\bibitem{RG05}
\BIBentryALTinterwordspacing
M.~Di~Raimondo and R.~Gennaro, ``New approaches for deniable authentication,''
  in \emph{Proceedings of the 12th ACM Conference on Computer and
  Communications Security}, ser. CCS' 05.\hskip 1em plus 0.5em minus
  0.4em\relax New York, NY, USA: Association for Computing Machinery, 2005, pp.
  112--121. [Online]. Available: \url{https://doi.org/10.1145/1102120.1102137}
\BIBentrySTDinterwordspacing

\bibitem{Ethereum}
D.~G. Wood, ``Ethereum: a secure decentralised g generalised transaction ledger
  homestead,'' \url{http://gavwood.com/paper.pdf}, 2014, [Online, Accessed
  26-June-2020].

\bibitem{B12}
Z.~Brakerski, ``Fully homomorphic encryption without modulus switching from
  classical gapsvp,'' in \emph{Advances in Cryptology -- CRYPTO 2012},
  R.~Safavi-Naini and R.~Canetti, Eds.\hskip 1em plus 0.5em minus 0.4em\relax
  Berlin, Heidelberg: Springer Berlin Heidelberg, 2012, pp. 868--886.

\bibitem{seal}
``{M}icrosoft {SEAL} (release 3.5),'' \url{https://github.com/Microsoft/SEAL},
  Apr. 2020, microsoft Research, Redmond, WA.

\bibitem{HHK10}
\BIBentryALTinterwordspacing
J.~Herranz, D.~Hofheinz, and E.~Kiltz, ``Some (in)sufficient conditions for
  secure hybrid encryption,'' \emph{Inf. Comput.}, vol. 208, no.~11, p.
  1243–1257, Nov. 2010. [Online]. Available:
  \url{https://doi.org/10.1016/j.ic.2010.07.002}
\BIBentrySTDinterwordspacing

\bibitem{LPR10}
V.~Lyubashevsky, C.~Peikert, and O.~Regev, ``On ideal lattices and learning
  with errors over rings,'' in \emph{Advances in Cryptology -- EUROCRYPT 2010},
  H.~Gilbert, Ed.\hskip 1em plus 0.5em minus 0.4em\relax Berlin, Heidelberg:
  Springer Berlin Heidelberg, 2010, pp. 1--23.

\bibitem{fhestd}
M.~Albrecht, M.~Chase, H.~Chen, J.~Ding, S.~Goldwasser, S.~Gorbunov, S.~Halevi,
  J.~Hoffstein, K.~Laine, K.~Lauter, S.~Lokam, D.~Micciancio, D.~Moody,
  T.~Morrison, A.~Sahai, and V.~Vaikuntanathan, ``Homomorphic encryption
  security standard,'' HomomorphicEncryption.org, Toronto, Canada, Tech. Rep.,
  November 2018.

\end{thebibliography}
\end{document}